\documentclass{psa}

\usepackage{amsmath}
\usepackage{mathtools}
\usepackage{mathrsfs} 
\usepackage{thmtools} 
\usepackage{thm-restate}
\usepackage{hyperref}
\usepackage{csquotes}

\usepackage{tikz}
\usepackage{xfp} 
\usepackage[outline]{contour} 
\usetikzlibrary{decorations.markings,decorations.pathmorphing}
\usetikzlibrary{decorations.pathreplacing}
\usetikzlibrary{angles,quotes} 
\usetikzlibrary{arrows.meta} 
\usetikzlibrary{patterns, patterns.meta}
\contourlength{1.4pt}

\newcommand{\calI}{\mathscr{I}} 
\tikzset{>=latex} 
\colorlet{myred}{red!80!black}
\colorlet{myblue}{blue!80!black}
\colorlet{mygreen}{green!80!black}
\colorlet{mydarkred}{red!50!black}
\colorlet{mydarkblue}{blue!50!black}
\colorlet{mylightblue}{mydarkblue!6}
\colorlet{mypurple}{blue!40!red!80!black}
\colorlet{mydarkpurple}{blue!40!red!50!black}
\colorlet{mylightpurple}{mydarkpurple!80!red!6}
\colorlet{myorange}{orange!40!yellow!95!black}
\tikzstyle{cone}=[mydarkblue,line width=0.2,top color=blue!60!black!30,
                  bottom color=blue!60!black!50!red!30,shading angle=60,fill opacity=0.9]
\tikzstyle{cone back}=[mydarkblue,line width=0.1,dash pattern=on 1pt off 1pt]
\tikzstyle{world line}=[myblue!60,line width=0.4]
\tikzstyle{world line t}=[mypurple!60,line width=0.4]
\tikzstyle{particle}=[mygreen,line width=0.5]
\tikzstyle{photon}=[-{Latex[length=4,width=3]},myorange,line width=0.4,decorate,
                    decoration={snake,amplitude=0.9,segment length=4,post length=3.8}]
\tikzstyle{photon2}=[-{Latex[length=4,width=3]},myorange,line width=0.4,decorate,
                    decoration={snake,amplitude=0.9,segment length=4,post length=3.8}]
\tikzstyle{singularity}=[myred,line width=0.6,decorate,
                         decoration={zigzag,amplitude=2,segment length=6.17}]

\renewcommand{\If}{\mathscr{I}^+}
\newcommand{\Ip}{\mathscr{I}^-}

\newcommand{\BH}{\mathcal{B}}

 \DeclareSymbolFont{borrowedletters}{OML}{txmi}{m}{it}
 \DeclareMathSymbol{\psi}{\mathord}{borrowedletters}{"20}

\makeatletter
\renewcommand{\printhistory}{{\par\addvspace{10pt}%
\historyfont\noindent%
\ifx\@history\empty\gdef\@history{Received xx xxx xxxx}\fi
\vskip0pt
\normalsize{\noindent Accepted for Publication in \textit{Philosophy of Science}\par \today}\par}
}%
\makeatother

\makeatletter
\let\c@theorem\relax
\makeatother

\theoremstyle{theormstyle}
\newtheorem{theorem}{Theorem}
\newtheorem{lemma}[theorem]{Lemma}

\theoremstyle{definition}
\newtheorem{definition}{Definition}

\begin{document}

\doival{10.1017/xxxxx}
\jnlPage{1}{25}
\jnlDoiYr{2026}
\lefttitle{Is Black Hole Evaporation Prediction Friendly?}

\papertitle{RESEARCH ARTICLE}

\title{Is Black Hole Evaporation Prediction Friendly?}

\author{Dominic Ryder$^{1}$}

\affil{$^{1}$Department of Philosophy, University of Geneva.
      \email{dominic.ryder@unige.ch}}

\received{XX XX XXXX}
\revised{XX XX XXXX}
\accepted{XX XX XXXX}

\begin{abstract}
\citet{Manchak2018-MANPRA-5} formulate the black hole information paradox as a failure of predictability in black hole evaporation spacetimes, diagnosed by non-global hyperbolicity. I offer a strategy for resolving this paradox. I argue that failures of predictability in black hole evaporation are not well diagnosed by non-global hyperbolicity. I then consider two weakenings of global hyperbolicity: prediction and retrodiction friendliness, the failure of which could ground a new paradox. However, deidealized black hole evaporation models can be prediction and retrodiction friendly. Therefore, the information paradox cannot be based upon failures of global hyperbolicity, nor either retrodiction or prediction unfriendliness.

\end{abstract}

\maketitle

\section{Introduction}\label{sec:intro}

In recent years some philosophers, such as \cite{Belot1999hawking}, \citet{maudlin2017information} and \citet{wallace_2020}, have become sceptical about whether the black hole information paradox, construed as a failure of unitarity due to the evaporation of black holes, is truly paradoxical. This is because one can only expect globally unitary evolution on globally hyperbolic spacetimes.\footnote{A slice is a closed achronal set without an edge. A Cauchy surface is a slice such that every causal curve without an endpoint intersects it exactly once. Heuristically, a Cauchy surface registers some information about every point in spacetime. A spacetime is globally hyperbolic if it admits a Cauchy surface and thus a well-posed initial value description. A globally hyperbolic spacetime is causally well behaved.} However, the usual model of black hole evaporation fails to be globally hyperbolic. Consequently, one should not expect unitary evolution through black hole evaporation. These authors then assuage the fear that global non-unitarity violates the evolution laws of quantum field theory by appealing to formulations of quantum field theory on non-globally hyperbolic spacetimes (see \citet{kay1992principle, Yurtsever_1994, janssen2022quantum} as well as \citet[app. B]{Belot1999hawking}).

In contrast, \citet{Manchak2018-MANPRA-5} give a clear and precise statement of a paradox in which the breakdown of global hyperbolicity is precisely what is paradoxical about black hole evaporation. Global hyperbolicity is the property needed to guarantee predictability in general relativity, in the sense of admitting a well-posed initial value problem.\footnote{By predictability I mean predictability and retrodictability. When the difference becomes important I shall specify which is intended. The sense in which global hyperbolicity provides a well-posed initial value problem is given by the following theorem: given an appropriate data surface in general relativity, there exists a unique (up to isometry) spacetime that is the maximal globally hyperbolic development (MGHD) of the data surface. In general relativity an appropriate data surface is a maximal 3-dimensional smooth manifold $\Sigma$, with a smooth Riemannian metric $h_{ab}$ and a smooth symmetric tensor field $K_{ab}$ that describes the extrinsic curvature. \citet{ChoquetBruhatGeroch1969} give the statement of the uniqueness of the MGHD (see also \citet{Geroch1970}). For melancholy about determinism in general relativity, see \citet[sec. 3.8]{earman1995bangs} and \citet{Manchak2011-MANWIA}.} Manchak and Weatherall show that the restriction of physically reasonable spacetimes to those that are globally hyperbolic (a restriction common in physics practice) is inconsistent with the physical reasonableness of black hole evaporation spacetimes (given a seemingly reasonable definition of a black hole evaporation spacetime), and thus we have a paradox. Given global hyperbolicity underwrites predictability in general relativity, this paradox threatens the predictability of the laws governing black hole evaporation. This complements Hawking's original (\citeyear{Hawking1976}) discussion of information loss, which concerned the breakdown of predictability.\footnote{Here, and throughout, I use predictability as it is used in the physics literature on the information paradox: as determined by an appropriate initial data set. In a quantum theory this may mean unique specification of a final probability distribution given an initial quantum state. This is how predictability is used in \citet{Hawking1976}. Significant parts of the philosophy of physics literature understand prediction as the domain of dependence of the chronological past of some observer, i.e. what is determinable from the information contained in an observer's past light cone (e.g. \citet{geroch1977, manchak2008prediction}). This is not the sense of prediction relevant here.}

In this paper I will argue against formulating the black hole information paradox in terms of global hyperbolicity. I argue that global hyperbolicity is not the appropriate standard of predictability in black hole evaporation due to the failure of general relativity for Planck-scale physics. Thus I offer a strategy for resolving the paradox as formulated by Manchak and Weatherall. I then consider a pair of interesting weakenings of global hyperbolicity: \textit{prediction} and \textit{retrodiction friendliness}, which plausibly could provide an appropriate standard of predictability for black hole evaporation, and could thus underwrite an information paradox based upon predictability. However, I argue that our best models of black hole evaporation can be prediction and retrodiction friendly, and so no such paradox can be defended. I will now state Manchak and Weatherall's formulation of the paradox, and provide a summary of my response. 

Manchak and Weatherall state the paradox as three premises, each of which we have reason to believe, but which together are inconsistent:

\begin{enumerate}
    \item (BHE) Some evaporation spacetimes are physically reasonable.
    \item (CCH) All physically reasonable spacetimes are globally hyperbolic.
    \item (KWL) No evaporation spacetime is globally hyperbolic.
\end{enumerate}

(BHE) follows from the existence of Hawking radiation and backreaction arguments (see \citet{Hawking:1975vcx}). (CCH) is endorsed as a commitment to predictability, as global hyperbolicity is required for a well-posed initial value problem in general relativity, and ``for many physicists, (CCH) is axiomatic'' \citep[p. 4]{Manchak2018-MANPRA-5}. (KWL) follows from the Kodama-Wald-Lesourd theorem (\citet{Kodama:1979vm, wald1984black, lesourd2018causal}), which states that, assuming a seemingly reasonable definition of a black hole evaporation spacetime, black hole evaporation implies non-global hyperbolicity.\footnote{\citet{Manchak2018-MANPRA-5} use the Kodama-Wald version of the theorem \citep{Kodama:1979vm, wald1984black}, not the \citet{lesourd2018causal} extension. However, the latter is required to guarantee the breakdown of global hyperbolicity.}

In section \ref{sec:predictability} I argue that (CCH) should be rejected. Although all non-globally hyperbolic spacetimes lack a well-posed initial value problem, some such failures only indicate a bound on the domain of applicability of general relativity.\footnote{See \citet{WeatherallForthcoming-WEAWDG} for a discussion of the breakdown of general relativity.} In black hole evaporation, the failure of predictability seems to arise only from our lack of a consistent theory of Planck-scale physics that can handle the divergent curvature predicted by semi-classical gravity. Therefore, the failure of predictability that is diagnosed by non-global hyperbolicity can be reduced to our ignorance of Planck-scale physics; singularity resolution can be expected to provide an initial value problem for a black hole in the final stages of its evaporation.\footnote{\label{fn:singularity_resolution}Many attitudes can be taken to singularities in classical general relativity, only one of which is completely well-behaved singularity resolution \citep{earman1995bangs,crowther2022four}. Indeed, what constitutes singularity resolution is not stable across physics practice \citep{Thebault_2023}. In this paper, only a minimal understanding of singularity resolution --- as recovering a well-posed initial value problem --- is needed, and this minimal understanding is motivated in section \ref{sec:predictability}. For a review of singularity resolution in quantum gravity, see \citet{bojowald2007singularities}.} 

I motivate this rejection of (CCH) via an analysis of Manchak and Weatherall's  ``impressionistic'' (\citeyear[fn.9]{Manchak2018-MANPRA-5}) formulation of their paradox, which identifies a failure of predictability, not retrodictability. I discuss Manchak and Weatherall's use of a particular conformal diagram, and prove a theorem that states any spacetime which admits representation by this conformal diagram is not conformally equivalent to the only spacetime that can support the impressionistic formulation of their paradox. This leads me to argue that the breakdown of predictability identified in their paradox is a result of our ignorance of Planck-scale physics, and so the failure of predictability (not retrodictability) can be expected to be resolved by a consistent theory of Planck-scale physics.\footnote{\label{fn:aproposMaudlin}\citet{Manchak2018-MANPRA-5} devote much of their paper to astutely criticising \citet{maudlin2017information} for arguments similar to the one presented here. In service of that goal their arguments are very useful, and the response I present above turns upon considerations beyond the remit of responding to Maudlin. In particular: a) my argument explicitly appeals to a consistent theory of Planck-scale physics (usually thought of as quantum gravity), a strategy Maudlin explicitly rejects; b) I consider (in section \ref{sec:retrodictability}) models of black hole evaporation which violate the assumption of the Kodama-Wald theorem used by Manchak and Weatherall; and regardless, c) I do not claim to recover global hyperbolicity via disconnected Cauchy surfaces, which is the specific construction of Maudlin's that Manchak and Weatherall target.} Thus, we should reject global hyperbolicity as the appropriate diagnostic tool for predictability in black hole evaporation, and in turn we should reject (CCH).

However, there is a pair of spacetime properties, prediction and retrodiction friendliness, which are weaker than global hyperbolicity and which some physicists take to diagnose those failures of predictability that a consistent theory of Planck-scale physics may not be expected to resolve. These are spacetimes such that the problematic causal structure is not confined to a singular point, but is an entire spacetime region; in particular, a region isolated behind a causal horizon and that `appears' or `vanishes' (in a sense to be made precise later). That retrodiction unfriendliness is the relevant spacetime property to the information paradox is supported by the following theorem (stated in section \ref{ssec: Prediction and retrodiction friendliness} and proven in appendix \ref{app:proof}): retrodiction unfriendliness implies a pure-to-mixed transition, thus recovering the traditional information paradox. 

Having rejected (CCH), Manchak and Weatherall's paradox is resolved. However, with the weakening of global hyperbolicity to prediction and retrodiction friendliness in hand, one might attempt to resuscitate the paradox by weakening (CCH) to (*CCH):

\begin{enumerate}
    \item [*2.] (*CCH) All physically reasonable spacetimes are prediction and retrodiction friendly. 
\end{enumerate}

To complete the resuscitation, (KWL) must be strengthened to (*KWL):

 \begin{enumerate}
    \item [*3.] (*KWL) Some physically reasonable black hole evaporation spacetime is retrodiction unfriendly. 
\end{enumerate}

It is therefore interesting to consider whether we have good grounds to believe (*KWL). In section \ref{sec:retrodictability} I answer in the negative: we do not have good grounds to believe some physically reasonable black hole evaporation spacetime is retrodiction unfriendly.  According to Manchak and Weatherall's definition of a black hole evaporation spacetime, all black hole evaporation spacetimes are retrodiction unfriendly. However, the model they use to motivate this definition is highly idealized. I present deidealized models of black hole evaporation which are retrodiction friendly. I argue these models are more reasonable than the idealized models usually discussed in the literature, and thus that physically reasonable black hole evaporation spacetimes may be retrodiction friendly. Therefore, one should not define black hole evaporation spacetimes such that they are retrodiction unfriendly --- in contrast to the definition used by Manchak and Weatherall --- and so the resuscitated paradox is avoided. I conclude that neither global hyperbolicity nor retrodiction friendliness can underwrite an information paradox: black hole evaporation may well be prediction and retrodiction friendly.

\section{The Information Paradox Requires Retrodiction Unfriendliness}\label{sec:predictability}

Manchak and Weatherall defend their formulation of the paradox via a discussion of an ``impressionistic'' \citep[fn.9]{Manchak2018-MANPRA-5} version. In this section I present this impressionistic version and argue that the failure of predictability (not retrodictability) therein is just the result of our ignorance of Planck-scale physics. This motivates a rejection of (CCH). I then introduce two weaker spacetime properties than global hyperbolicity, prediction and retrodiction friendliness, which capture an asymmetry between predictability and retrodictability in black hole evaporation; evaporation-Schwarzschild is prediction friendly but retrodiction unfriendly. I suggest that failures of retrodictability due to retrodiction unfriendliness may go beyond our ignorance of Planck-scale physics, and that it is reasonable to doubt that a consistent theory in that regime will recover retrodictable evolution for such spacetimes. 

\subsection{Planck-Scale Physics and Predictability}\label{ssec:Planckscale physics}

The impressionistic version of the paradox relies on two conformal diagrams. Roughly, conformal diagrams depict the causal structure of a spacetime, whilst shrinking infinity into a finite distance. The first diagram, figure \ref{fig:CollapseSchwarzschild}, is a model of an uncharged, non-rotating black hole which forms by the collapse of stellar matter and then remains for all time. I call this \textit{collapse-Schwarzschild}. The second diagram, figure \ref{fig:EvaporationSchwarzschild}, is a model of an uncharged, non-rotating black hole, which also forms by the collapse of stellar matter, but then evaporates and eventually disappears, leaving behind a region of spacetime isometric to flat Minkowski spacetime. I call this \textit{evaporation-Schwarzschild}.

\begin{figure}
    \centering
    \begin{tikzpicture}[scale=2]
  \message{collapse-Schwarzschild}
  \coordinate (O) at (0, 0); 
  \coordinate (NE)  at ( 1, 1); 
  \coordinate (S)  at ( 0,-1); 
  \coordinate (N)  at ( 0, 1); 
  \coordinate (E)  at ( 1.5, 0.5); 
  \coordinate (X)  at ( 0.4, 1); 
  \coordinate (Y) at (0, -0.3); 

  \draw[particle, fill = mylightblue]
      (N) to (S) to[out=77,in=-70] (X);
  
  \draw[singularity] (N) -- node[above] {singularity} (NE);
  
  \draw[thick,mydarkblue] (NE) -- (E) -- (S) -- (N);
  \draw[thick,mydarkblue] (O) -- (NE);


  \node[above=0,left=1,mydarkblue] at (O) {$r=0$};
  \node[above=1,right=1,mydarkblue] at (E) {$i^0$};
  \node[right=1,below=1,mydarkblue] at (S) {$i^-$};
  \node[right=4,above=1,mydarkblue] at (NE) {$i^+$};
  \node[right=1,below=1,mydarkblue] at (S) {$i^-$};

  \node[mydarkblue,above right=-1] at (1.25,0.75) {$\If$};
  \node[mydarkblue,below right=-1] at (0.75,-0.3) {$\Ip$};

\tikzset{mylabel/.style  args={at #1 #2  with #3}{
    postaction={decorate,
    decoration={
      markings,
      mark= at position #1
      with  \node [#2] {#3};
 } } } }
  
  \draw[world line]
      (Y) to [out = 10, in = 190] node[midway, above] {$\Sigma_1$} (E) ;
  
\end{tikzpicture}
    \caption{The conformal diagram for stellar collapse into a Schwarzschild black hole. The shaded region represents matter undergoing collapse.}
    \label{fig:CollapseSchwarzschild}
\end{figure}

\begin{figure}
    \centering
    \begin{tikzpicture}[scale=2.3]
  \message{evaporation-Schwarzschild}
  
  \coordinate (O) at (0, 0); 
  \coordinate (NE)  at (0.5, 0.5); 
  \coordinate (NN) at (0.5, 1); 
  \coordinate (S)  at ( 0,-1); 
  \coordinate (N)  at ( 0, 0.5); 
  \coordinate (E)  at ( 1.25, 0.25); 
  \coordinate (X)  at ( 0.2, 0.5); 
  \coordinate (Y1) at (0, -0.3); 
  \coordinate (Y2) at (0.5, 0.6);
  
  \draw[particle, fill = mylightblue]
      (N) to (S) to[out=77,in=-70] (X);
  
  \draw[singularity] (N) -- node[above] {} (NE);
  
  \draw[thick,mydarkblue] (NE) -- (NN) -- (E) -- (S) -- (N);
  \draw[thick,mydarkblue] (O) -- (NE);

    \path (O) -- (NE);

  \node[above=0,left=1,mydarkblue] at (O) {$r=0$};
  \node[above=1,right=1,mydarkblue] at (E) {$i^0$};
  \node[right=1,below=1,mydarkblue] at (S) {$i^-$};
  \node[right=1,above=1,mydarkblue] at (NN) {$i^+$};
  \node[right=1,below=1,mydarkblue] at (S) {$i^-$};

  \node[mydarkblue,above right=-1] at (0.75,0.75) {$\If$};
  \node[mydarkblue,below right=-1] at (0.75,-0.3) {$\Ip$};


\tikzset{mylabel/.style  args={at #1 #2  with #3}{
    postaction={decorate,
    decoration={
      markings,
      mark= at position #1
      with  \node [#2] {#3};
 } } } }
  
  \draw[world line]
      (Y1) to [out = 10, in = 190] node[midway, above] {$\Sigma_1$} (E) ;

    \draw[world line]
      (Y2) to [out = -10, in = 170] node[pos=0.3, above] {$\Sigma_2$} (E) ;
  
\end{tikzpicture}
    \caption{The conformal diagram representing the evaporation of a Schwarzschild black hole formed by collapse. The mass of the black hole is shrinking over time, and after the evaporation, the spacetime is locally Minkowski. Neither $\Sigma_1$ nor $\Sigma_2$ is a Cauchy surface.}
    \label{fig:EvaporationSchwarzschild}
\end{figure}

Manchak and Weatherall's impressionistic version of the information loss paradox is the following\footnote{I have replaced Manchak and Weatherall's figure references with my own.}:

\begin{quote}
   Focus attention on just the bottom half of Fig. [\ref{fig:EvaporationSchwarzschild}], as depicted in Fig. [\ref{fig:CollapseSchwarzschild}]\dots there should be a maximal spacetime that we get by allowing $\Sigma_1$ to evolve according to the laws — and indeed, its Penrose diagram should look like Fig. [\ref{fig:CollapseSchwarzschild}]. But the spacetime in Fig. [\ref{fig:CollapseSchwarzschild}] is, by construction, extendible, in the sense that there is a proper isometric embedding of this spacetime into the one depicted in Fig. [\ref{fig:EvaporationSchwarzschild}]\dots it follows that the laws, plus initial data specified on $\Sigma_1$, do not determine what happens indefinitely into the future, because this second spacetime is not globally hyperbolic, and $\Sigma_1$ fails to be a Cauchy surface. One reaches a horizon across which the laws no longer determine what happens. This situation suggests that something has gone seriously wrong. Hence the paradox. \citep[pp. 6-7]{Manchak2018-MANPRA-5}
\end{quote}

This paradox turns on an expectation of a well-posed initial value problem in general relativity. It also involves a failure of predictability, not retrodictability. The failure putatively deserves the epithet `paradox' because all physical theories we have discovered to date admit a well-posed initial value problem.\footnote{I ignore here issues of determinism pertaining to the measurement problem. See also \citet{earman1986primer} for further hesitations that I will ignore.} Predictability may fail, but given the success of the paradigm of well-posed initial value problems, such a failure would amount to learning something extremely deep and surprising about the physical world.\footnote{I encourage readers to recall that here predictability is essentially being elided with determinism, in that it means something like: admits a well-posed initial value problem, as discussed in the introduction.}

However, the failure of predictability is not such a great threat when it is merely a result of our ignorance of Planck-scale physics. To see this, consider what Manchak and Weatherall call the ``bottom half'' of evaporation-Schwarzschild. They claim (in the caption of their Fig. 3, p. 7) that the bottom half of evaporation-Schwarzschild is a spacetime conformally equivalent to collapse-Schwarzschild, and thus represented by the same conformal diagram. However, as I will now show, no plausible interpretation of Manchak and Weatherall's ``bottom half'' can both: I) support their impressionistic version of the information paradox, and II) be represented with the same conformal diagram as collapse-Schwarzschild.

The two plausible candidate spacetimes that can be embedded into evaporation-Schwarzschild prior to the evaporation time (i.e. plausible candidates for what Manchak and Weatherall call the ``bottom half'') are: (a) the causal past of the black hole region, $J^-(\mathcal{B})$, and (b) the MGHD of $\Ip$, $D(\Ip)$.\footnote{Spacetime (b) should really be defined as the MGHD of some maximal achronal slice in the interior of the spacetime which does not intersect the Cauchy horizon, as the points of $\Ip$ are not in the physical spacetime and timelike curves reach $i^-$ rather than $\Ip$. However, I will continue to refer to $D(\Ip)$ as a convenient abuse of notation.} These are the regions below the dashed lines in figure \ref{fig:bothEmbeddings} and are drawn as standalone spacetimes in figure \ref{fig:CausalPastEmbed}. 

\begin{figure}
    \centering
    \begin{tikzpicture}[scale=2.5]
 
  \coordinate (O) at (0, 0); 
  \coordinate (NE)  at (0.5, 0.5); 
  \coordinate (NN) at (0.5, 1); 
  \coordinate (S)  at ( 0,-1); 
  \coordinate (N)  at ( 0, 0.5); 
  \coordinate (E)  at ( 1.25, 0.25); 
  \coordinate (X)  at ( 0.2, 0.5); 
  \coordinate (Y1) at (0.75,0.75); 
  \coordinate (Y2) at (1, 0);
  
  \draw[particle, fill = mylightblue]
      (N) to (S) to[out=77,in=-70] (X);
  
  \draw[singularity] (N) -- node[above] {} (NE);
  
  \draw[thick,mydarkblue] (NE) -- (NN) -- node[midway, above right, mydarkblue] {$\If$} (E) -- node[midway, below right, mydarkblue] {$\Ip$} (S) -- (N);
  \draw[thick,mydarkblue] (O) -- (NE);

    \path (O) -- (NE);

  \node[above=0,left=1,mydarkblue] at (O) {$r=0$};
  \node[above=1,right=1,mydarkblue] at (E) {$i^0$};
  \node[right=1,below=1,mydarkblue] at (S) {$i^-$};
  \node[right=1,above=1,mydarkblue] at (NN) {$i^+$};
  \node[right=1,below=1,mydarkblue] at (S) {$i^-$};


\tikzset{mylabel/.style  args={at #1 #2  with #3}{
    postaction={decorate,
    decoration={
      markings,
      mark= at position #1
      with  \node [#2] {#3};
 } } } }

    \draw[dashed]
      (NE) to node[pos=0.5, below right = -0.1cm] {(b)} (Y1) ;
\draw[dashed]
      (NE) to node[pos=0.5, below left = -0.1cm] {(a)} (Y2) ;
  
\end{tikzpicture}
    \caption{The region below (a) is the causal past of the black hole, and the region below (b) is the MGHD of $\Ip$.}
    \label{fig:bothEmbeddings}
\end{figure}

\begin{figure}

\centering

 \begin{minipage}[h]{.45\textwidth}
   \begin{tikzpicture}[scale=2.5]
  \message{Causal past}

  \coordinate (O) at (0, 0); 
  \coordinate (NE)  at (0.5, 0.5); 
  \coordinate (NN) at (0.5, 1); 
  \coordinate (S)  at ( 0,-1); 
  \coordinate (N)  at ( 0, 0.5); 
  \coordinate (E)  at ( 1.25, 0.25); 
  \coordinate (X)  at ( 0.2, 0.5); 
  \coordinate (Y1) at (0.75,0.75); 
  \coordinate (Y2) at (1, 0);
  
  \draw[particle, fill = mylightblue]
      (N) to (S) to[out=77,in=-70] (X);
  
  \draw[singularity] (N) -- node[above] {} (NE);
  
  \draw[thick,mydarkblue] (Y2) -- node[midway, below right, mydarkblue] {$\Ip$}  (S) -- (N);
  \draw[thick,mydarkblue] (O) -- (NE);

    \path (O) -- (NE);

  \node[above=0,left=1,mydarkblue] at (O) {$r=0$};
  \node[right=1,below=1,mydarkblue] at (S) {$i^-$};
  \node[right=1,below=1,mydarkblue] at (S) {$i^-$};

\tikzset{mylabel/.style  args={at #1 #2  with #3}{
    postaction={decorate,
    decoration={
      markings,
      mark= at position #1
      with  \node [#2] {#3};
 } } } }

      \draw[dashed]
      (NE) to node[midway, above right = -0.1cm, mydarkblue] {$\delta J^-(\mathcal{B})$} (Y2) ;
  
\end{tikzpicture}
\end{minipage}
\begin{minipage}[h]{.45\textwidth}
\begin{tikzpicture}[scale=2.5]
    \coordinate (O) at (0, 0); 
  \coordinate (NE)  at (0.5, 0.5); 
  \coordinate (NN) at (0.5, 1); 
  \coordinate (S)  at ( 0,-1); 
  \coordinate (N)  at ( 0, 0.5); 
  \coordinate (E)  at ( 1.25, 0.25); 
  \coordinate (X)  at ( 0.2, 0.5); 
  \coordinate (Y1) at (0.75,0.75); 
  \coordinate (Y2) at (1, 0);
  
  \draw[particle, fill = mylightblue]
      (N) to (S) to[out=77,in=-70] (X);
  
  \draw[singularity] (N) -- node[above] {} (NE);
  
  \draw[thick,mydarkblue] (Y1) -- (E) -- (S) -- (N);
  \draw[thick,mydarkblue] (O) -- (NE);

    \path (O) -- (NE);

  \node[above=0,left=1,mydarkblue] at (O) {$r=0$};
  \node[above=1,right=1,mydarkblue] at (E) {$i^0$};
  \node[right=1,below=1,mydarkblue] at (S) {$i^-$};
  \node[right=1,below=1,mydarkblue] at (S) {$i^-$};

  \node[mydarkblue,above right=-1] at (0.95,0.55) {$\If$};
  \node[mydarkblue,below right=-1] at (0.75,-0.3) {$\Ip$};


\tikzset{mylabel/.style  args={at #1 #2  with #3}{
    postaction={decorate,
    decoration={
      markings,
      mark= at position #1
      with  \node [#2] {#3};
 } } } }

    \draw[dashed]
      (NE) to node[pos=0.5, above left = -0.1cm, mydarkblue] {$\mathcal{H}^C$} (Y1) ;

\end{tikzpicture}
    \end{minipage}
    \caption{Left: The casual past of the black hole, $J^-(\mathcal{B})$. Right: The MGHD of $\Ip$, $D(\Ip)$.}
    \label{fig:CausalPastEmbed}
\end{figure}

Consider first a) the causal past of the black hole; this spacetime violates conjunct I: it cannot support the impressionistic version of the paradox. The causal past of the black hole is extendible, but its extendibility does not signal a breakdown of predictable evolution because it is merely a result of restricting to a submanifold of a globally hyperbolic spacetime, and there is clearly no problem with predictable evolution in the extended globally hyperbolic spacetime. To see this, first note that, although the causal past of the black hole is conformally equivalent to collapse-Schwarzschild, it has very different asymptotic structure: the future null boundary is neither asymptotically flat nor infinitely far away. Thus, the label `future null infinity' for this boundary of $J^-(\mathcal{B})$ is somewhat misleading.\footnote{\citet[p. 272]{birrell_davies_1982} similarly use `future null infinity' to describe the finitely far away strongly curved null boundary of the causal past of the black hole, so Manchak and Weatherall may be following this textbook usage.} Importantly, this obscures the fact that the extendibility of the spacetime is trivial in the sense that we have taken a submanifold of a globally hyperbolic spacetime, and so we can extend back to this globally hyperbolic spacetime and no problem with predictable evolution arises.

In more detail, the paradox requires our expectation of predictable laws to be ruined by the fact that the globally hyperbolic ``bottom half'' spacetime can be embedded into the larger evaporation-Schwarzschild spacetime which is not determined by a Cauchy surface for the former. However, the causal past of the black hole can be embedded into the globally hyperbolic MGHD of $\Ip$. Therefore, we already have the result that a Cauchy surface for the ``bottom half'' spacetime (construed for now as the causal past of the black hole) does not ``determine what happens indefinitely into the future'' \citep[p. 6]{Manchak2018-MANPRA-5} because a Cauchy surface for the causal past of the black hole is not a Cauchy surface for the MGHD of $\Ip$. However, this is not a paradox; it is just the result of restricting to a submanifold of a globally hyperbolic spacetime which does not contain a Cauchy surface for that spacetime. One would get the exact same extendibility by taking a partial Cauchy surface in Minkowski spacetime, and Minkowski spacetime clearly has no predictable evolution problem. Therefore, the causal past of the black hole cannot support the impressionistic version of the paradox. Hence, the only paradox-supporting candidate to be the ``bottom half'' of evaporation-Schwarzschild is the MGHD of $\Ip$. 

Consider then b) the MGHD of $\Ip$; this spacetime violates conjunct II: it cannot be represented with the same conformal diagram as collapse-Schwarzschild. In appendix \ref{app:proof} I prove the following theorem:

\begin{restatable}{theorem}{cfinequiv}\label{thm:cfinequiv}

    Let $(M,g)$ be the spacetime described by the MGHD of $\Ip$ and let $(N,h)$ be collapse-Schwarzschild. There does not exist a conformal isometry $\psi: M \rightarrow N$.
\end{restatable}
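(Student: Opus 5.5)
The plan is to find a conformally invariant causal property that one spacetime has and the other lacks. A conformal isometry $\psi: M\to N$ is a diffeomorphism with $\psi^* h = \Omega^2 g$, so it preserves the causal structure. Concretely, it maps timelike and null curves to curves of the same type, $I^\pm$ and $J^\pm$ to the corresponding sets, and inextendible causal curves to inextendible causal curves. It therefore preserves every property defined purely in terms of the causal order. Examples are inclusion relations between chronological pasts, the existence of a future-trapped region, and the property that $I^-(\gamma)$ of some inextendible timelike curve is (or is not) the whole manifold. My strategy is to pick such a property that distinguishes $(M,g)=D(\Ip)$ from collapse-Schwarzschild $(N,h)$.

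The candidate property is the following: \emph{there exists a future-inextendible timelike curve $\gamma$ such that $N \setminus I^-(\gamma)$ has non-empty interior containing a point through which every future-inextendible causal curve remains in that set.} In other words, there is a black hole region relative to the "observers at infinity." It is cleaner to state this intrinsically, without referring to $\If$.

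1. In $N$, take $\gamma$ to be an outgoing observer at large $r$ that reaches $i^+$. Then $I^-(\gamma)$ is the exterior region, and its complement is the black hole $\mathcal{B}$. Every future-inextendible causal curve from a point in $\mathcal{B}$ stays in $\mathcal{B}$ and ends at the singularity, and $\mathcal{B}$ contains open sets.

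2. In $M=D(\Ip)$, I will show the opposite, using the conformal diagram in Figure \ref{fig:CausalPastEmbed} (right). The region is bounded to the future by the Cauchy horizon $\mathcal{H}^C$, which emanates from the evaporation point. The black hole part of $M$, namely the interior region between the horizon and the singularity, still exists in $M$. So simply "having a non-empty complement" does not separate the two spacetimes. I will refine the invariant to the structure of the future boundary at infinity. In $N$, the union of $I^-(\gamma)$ over all future-inextendible timelike $\gamma$ that are not in $\mathcal{B}$ has a boundary consisting of the event horizon, and the event horizon is a complete null hypersurface generated by future-inextendible null geodesics that stay forever in $N$ (at $r=2M$). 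In $M$, the corresponding boundary in the same diagram is the null line from $r=0$ to the evaporation point. Its generators leave $M$ toward the excised point. Every curve approaching $\mathcal{H}^C$ from $M$ is future-inextendible in $M$ and yet has a past set that fails to include the region near the horizon in the way that it does in $N$.

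A cleaner invariant, which is my preferred route, is \emph{maximality of the chronological past of an observer}. Via the IP boundary construction, the future causal boundary is conformally invariant. In $N$, the set of terminal indecomposable past sets (TIPs) consists of a single "$i^+$" TIP that equals $I^-(\mathcal{\If})$, together with the singularity TIPs and null-infinity TIPs. In $M$, curves approaching $\mathcal{H}^C$ give TIPs $P$ that are strictly contained in the TIPs of curves running out along $\If$, and these form a one-parameter null family ending at the TIP of the evaporation point. The key difference is this: in $N$, the TIPs at $\If$ form a chain whose union is the whole exterior. This union is itself a TIP (that of $i^+$) and includes the horizon neighbourhood. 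In $M$, the union of the chain of TIPs along $\If$ and then along $\mathcal{H}^C$ is not a TIP, because nothing in $M$ has the full exterior together with the horizon in its past. Equivalently, in $N$ there is an inextendible timelike curve whose past contains the past of every point outside $\mathcal{B}$, whereas in $M$ no such curve exists. This is because the points just below $\mathcal{H}^C$ near the evaporation point and near $\If$ have pasts that no single curve in $M$ can exhaust. Since $\psi$ induces a bijection of TIPs that preserves inclusion, this gives the contradiction.

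The main obstacle is making step 2 rigorous. The diagram is only schematic, so I must establish, from whatever precise definition of $D(\Ip)$ is used (the MGHD of a slice that misses $\mathcal{H}^C$), that no timelike curve in $M$ has a past containing points arbitrarily close to the evaporation point while also reaching arbitrarily far along $\If$. I would first try to argue this using the achronality of $\mathcal{H}^C$ together with the fact that $M$ lies in $I^-(\mathcal{H}^C)\cup\mathcal{H}^C$-free domain: any $\gamma$ in $M$ stays below the null surface $\mathcal{H}^C$, so $I^-(\gamma)$ misses a neighbourhood of some generator segment that the corresponding curve in $N$ captures.
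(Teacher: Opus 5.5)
\paragraph{The key step fails.} You claim that no future-inextendible timelike curve in $M$ has a chronological past containing the whole exterior. In the causal structure you read off Figure~\ref{fig:CausalPastEmbed} (right), this is false. The event horizon and $\mathcal{H}^C$ lie on a single null line, and that line is the past light cone of the corner where $\mathcal{H}^C$ meets $\If$. Take a timelike curve running into that corner from below, for instance an observer accelerating out towards $\If$ just before the retarded time of $\mathcal{H}^C$. It stays in $M$ and is future-inextendible there. Its chronological past is exactly $M\setminus\overline{\mathcal{B}}$, which is precisely the analogue of the $i^+$ TIP of $N$. So your invariant does not separate the two spacetimes.

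Your supporting claims fail for the same reason.
\begin{itemize}
\item The TIPs of curves ending on $\mathcal{H}^C$ are contained in this corner TIP. They are incomparable with every TIP of a curve ending on $\If$ strictly below the corner, because points just under $\mathcal{H}^C$ are not in the past of earlier points of $\If$. Your claimed strict inclusion is therefore wrong.
\item The lemma you plan to prove in your last paragraph is false. The corner observer's past contains points arbitrarily close to the evaporation event and reaches all the way up $\If\cap\partial M$.
\end{itemize}

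\paragraph{The four-dimensional case.} In the genuine four-dimensional spacetimes, the corner is a 2-sphere's worth of points of $\If$, and a version of your conclusion may survive. It would need an angular argument you do not give: a timelike curve whose retarded time stays below that of $\mathcal{H}^C$ cannot get far out in two antipodal directions. Your stated lemma is false there as well. Moreover, your invariant is phrased via $\mathcal{B}$, which is defined relative to $\If$ rather than by the causal order of $(M,g)$. You would still have to recast it in purely causal terms before $\psi$ can transport it.

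\paragraph{What actually separates the spacetimes.} The paper uses the region $\mathcal{A}=J^+(\gamma_1)$, where $\gamma_1$ is the ingoing null ray from $\Ip$ into the evaporation event. $\mathcal{A}$ lies outside $\mathcal{B}$, and its causal future never meets $\mathcal{B}$, because ingoing null rays from $\mathcal{A}$ terminate on $\mathcal{H}^C$. In $N$, by contrast, every point outside the black hole can send an ingoing null ray across the horizon.

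The paper turns this into a contradiction in three moves. It observes that $\psi(\mathcal{A})$ must be a future set of $N$. It takes null curves $\gamma_3\subset J^-(\gamma_2)$ in $\mathcal{A}$ that both end on the null surface $\mathcal{H}^C$. It then argues that their images would have to end on the spacelike singularity, where no such nesting is possible.

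In TIP language, at the level of the diagram, the relevant difference is not the absence of a top exterior TIP in $M$. It is that beneath that TIP sit two mutually incomparable families, one from $\mathcal{H}^C$ and one from $\If$. In $N$, the TIPs under the $i^+$ TIP form a single chain. You correctly noticed that $\mathcal{B}$ is common to both spacetimes, but the missing idea is the region $\mathcal{A}$.
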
 

In sketch: there is a region of the MGHD of $\Ip$ which is not in the black hole region and from which the black hole region is inaccessible by a causal worldline; but no such region exists in collapse-Schwarzschild, and because conformal transformations do not change causal structure, the two spacetimes cannot be conformally equivalent. 

Therefore, neither a) the causal past of the black hole, nor b) the MGHD of $\Ip$, can both I) support the impressionistic version of the paradox, and II) be represented with the same conformal diagram as collapse-Schwarzschild. Why does this matter? Because the ``bottom half'' conformal diagram is used as an intuition pump to the conclusion that something ``has gone seriously wrong'' with predictability in black hole evaporation. Collapse-Schwarzschild is bounded to the future by the singularity and future null infinity. Therefore, if collapse-Schwarzschild were the ``maximal spacetime that we get by allowing $\Sigma_1$ [a Cauchy surface for the ``bottom half''] to evolve according to the laws'' as Manchak and Weatherall state, it would be very surprising that this spacetime can be embedded into a larger, non-globally hyperbolic spacetime. However, the paradox-supporting ``bottom half'' spacetime, the MGHD of $\Ip$, is bounded to the future by a Cauchy horizon induced by the singularity (unlike the future null boundary of the causal past which is a Cauchy horizon induced by using a partial Cauchy surface). Given this singularity-induced Cauchy horizon, it is not at all surprising that the MGHD of $\Ip$ can be embedded into a larger spacetime, and that this larger spacetime is not globally hyperbolic. 

More importantly, we know what gives rise to the predictability-breaking Cauchy horizon: the naked singularity due to the evaporation event. Consider again figure \ref{fig:bothEmbeddings}: evolution from $\Ip$ will be completely predictable until a slice just prior to the Cauchy horizon when, at the evaporation time, the singularity is exposed and ruins the predictability of the evolution. Therefore, if we had a consistent theory of Planck-scale physics, we could state what happens at the evaporation event and this would be sufficient to recover predictable evolution through the evaporation event.\footnote{Of the four attitudes to singularities of \citet{crowther2022four}, the assumption that a well-posed initial value problem will be recovered in singularity resolution is consistent with the first, second and fourth, which are the better-motivated attitudes given independent arguments that general relativity is merely a low-energy effective theory. It will also be consistent with the third attitude if singularities and a well-posed initial value problem can be made to co-exist, but this seems unlikely.}

This response to the impressionistic version of the paradox captures the physicist intuition that evaporation events are not a barrier to deterministic evolution into the future. Wald, for example, makes this intuition explicit: the failure of $\Sigma_1$ in figure \ref{fig:EvaporationSchwarzschild} to be a Cauchy surface: 

\begin{quote}
    ``\dots appears to result from only the `single missing point' corresponding to the final instant of black-hole evaporation. This does not appear to be a serious obstacle to obtaining deterministic evolution in quantum gravity.'' \citep[fn. 15]{Wald1980}
\end{quote}
See also \citet{wald1984black} for a sketched formal argument defending this intuition. Similarly, in his original information paradox paper, \citet{Hawking1976} is unperturbed by the naked singularity and the associated breakdown of predictable evolution.\footnote{The precise technical formulation of a well-posed initial value problem in such a theory, whether unitary evolution, conditions on superscattering operators, or something else, will have to wait for a consistent theory of Planck-scale physics. But I follow Hawking and Wald in presuming such a formulation will be provided.}

This suggests that --- in the context of black hole evaporation --- we should reject (CCH) in Manchak and Weatherall's paradox; that is, reject the claim that all physically reasonable spacetimes are globally hyperbolic. Black hole evaporation takes us out of the domain in which we expect general relativity to be valid, and so general relativity is not the appropriate theory in which to formulate a well-posed initial value problem. The failure of predictability signaled by the failure of global hyperbolicity is just our ignorance of the consistent theory of Planck-scale physics, rather than an interesting consequence of the appropriate physical theory on the domain. Thus, we resolve Manchak and Weatherall's statement of the paradox by rejecting (CCH).

\subsection{Prediction and Retrodiction Friendliness}\label{ssec: Prediction and retrodiction friendliness}

The arguments of the previous section concerned the ability of Planck-scale physics to recover future-orientated predictable evolution through the evaporation event. However, the failures of predictability in black hole evaporation often concern failures of retrodictability, not future-orientated predictability. Manchak and Weatherall are aware of this asymmetry, writing ``the puzzle concerns whether any specification of data on a particular surface is sufficient to retrodict the physical process by which that data came about'' (\citeyear[p. 1]{Manchak2018-MANPRA-5}). It is therefore a further question whether a consistent theory of Planck-scale physics can be expected to recover retrodictable evolution through black hole evaporation. It is possible that, whereas the failure of predictability can be reasonably reduced to the problem of Planck-scale physics, the failure of retrodictability cannot and thus generates an information paradox.

One response to the failure of retrodictability in black hole evaporation is to import the arguments of the previous section \textit{mutatis mutandis}. Black hole evaporation takes us outside of the domain of applicability of general relativity and so, plausibly, as with predictability, we can expect a consistent theory of Planck-scale physics to recover retrodictable evolution. Someone convinced by this argument, applied to retrodiction, will then be satisfied that there is no paradox due to retrodictability. However, as discussed above, predictability and retrodictability are usually treated asymmetrically in the black hole information paradox. Moreover, the same physicists who were confident a consistent theory of Planck-scale physics would recover predictable evolution are more sceptical in the case of retrodiction. For example, continuing the quote above, Wald writes:

\begin{quote}
    ``On the other hand, intuitively, [$\Sigma_2$ in figure \ref{fig:EvaporationSchwarzschild}] does not come close to being a Cauchy surface for development into the past'' \citep[fn. 15]{Wald1980}.
\end{quote} 
\citet[p. 2462]{Hawking1976} too shares this intuition:

\begin{quote}
    ``Measurements at future infinity are insufficient to determine completely the state of the system at past infinity: One also needs data on the event horizon describing what fell into the black hole.''
\end{quote}



The worry is that retrodiction requires not just a description of the singularity, but everything that fell into the black hole, and this information is strictly isolated from future null infinity due to the event horizon. The failure of prediction is just due to a naked singularity, whereas the failure of retrodiction is due to the existence of a region of spacetime causally isolated from future null infinity that `vanishes'. So although the failure of predictability is reducible to our ignorance of Planck-scale physics, the standard view is that failures of retrodictability due to the `vanishing' of regions of spacetime causally isolated from future null infinity are not so reducible and lead to an information paradox. In this section I introduce two  novel spacetime properties, prediction and retrodiction friendliness, which can parse the asymmetry between prediction and retrodiction in black hole evaporation spacetimes.


The talk of vanishing spacetime regions above lacks conceptual clarity, and will be made precise shortly. For now, the foregoing discussion suggests the following crudely stated spacetime properties: a spacetime is \textit{retrodiction friendly} if it does not have a region causally isolated from future null infinity that `disappears'; a spacetime is retrodiction unfriendly if it has such a region. Likewise, a spacetime is \textit{prediction friendly} if it does not have a region of spacetime causally isolated from past null infinity that `appears'; a spacetime is prediction unfriendly if it has such a region.

Minkowski spacetime with an arbitrary point removed fails to be globally hyperbolic but is prediction and retrodiction friendly: no region of the spacetime is ever causally isolated from future null infinity.\footnote{More physically reasonable examples of retrodiction friendly but non-globally hyperbolic spacetimes will be given in section \ref{sec:retrodictability}, motivated by black hole evaporation.} On the other hand, evaporation-Schwarzschild --- in addition to failing to be globally hyperbolic --- fails to be retrodiction friendly because the black hole region is precisely one which is causally isolated from future null infinity and then `vanishes'. 

\begin{figure}
    \centering
    \begin{tikzpicture}[scale=3]
  \message{Retrodiction friendliness}
  
  \coordinate (O) at (0, 0); 
  \coordinate (NE)  at (0.5, 0.5); 
  \coordinate (NN) at (0.5, 1); 
  \coordinate (S)  at ( 0,-1); 
  \coordinate (N)  at ( 0, 0.5); 
  \coordinate (E)  at ( 1.25, 0.25); 
  \coordinate (X)  at ( 0.2, 0.5); 
  \coordinate (Y1) at (0, 0.2); 
  \coordinate (Y2) at (0.5, 0.6);

\clip[]
      (-0.05, 0) |- (NN) -| (E) |-++ (-1.3,-0.5) |- cycle;
  
  \draw[singularity] (N) -- node[above] {} (NE);
  
  \draw[thick,mydarkblue] (NE) -- (NN) -- (E) -- (S) -- (N);
  \draw[thick,mydarkblue] (O) -- (NE);

    \path (O) -- (NE);

  \node[above=0,left=1,mydarkblue] at (O) {};
  \node[above=1,right=1,mydarkblue] at (E) {};
  \node[right=1,below=1,mydarkblue] at (S) {};
  \node[right=1,above=1,mydarkblue] at (NN) {};
  \node[right=1,below=1,mydarkblue] at (S) {};

  \node[mydarkblue,above right=-1] at (0.75,0.75) {};
  \node[mydarkblue,below right=-1] at (0.75,-0.3) {};

\tikzset{mylabel/.style  args={at #1 #2  with #3}{
    postaction={decorate,
    decoration={
      markings,
      mark= at position #1
      with  \node [#2] {#3};
 } } } }
  
  \draw[world line, mylabel=at 0.5 above left with {}]
      (Y1) to [out = 10, in = 190] (E) ;

    \draw[world line, mylabel=at 0.5 below with {}]
      (Y2) to [out = -10, in = 170] (E) ;

      \draw[dashed]
      (NE) --++ (0.28, -0.28) ;
      
      \draw[densely dotted]
      (NE) --++ (0.08, 0.08) ;

\draw [decorate,decoration={brace,amplitude=5pt}]
  (0.5, 0.62) -- (0.6, 0.6) node[midway,yshift=0.8em, xshift=0.1em]{$\scriptstyle Q$};

  \draw [decorate,decoration={brace,amplitude=5pt}]
  (0, 0.24) -- (0.3, 0.24) node[midway,yshift=0.8em]{$\scriptstyle V$};

\draw [decorate,decoration={brace,amplitude=5pt,mirror}]
  (0, 0.2) -- (0.78, 0.2) node[midway,yshift=-0.8em]{$\scriptstyle K$};

   \node at (0.27, 0.42)[circle,fill,inner sep=1.5pt]{};
    \node at (0.31, 0.38)[] {$\scriptstyle p$};
  
\end{tikzpicture}
    \caption{The regions $K$, $V$ and $Q$ in the definition of retrodiction (un)friendliness for evaporation-Schwarzschild. $p$ is a point within $D^+(\Sigma_1)- J^+(\Sigma_2)\cup J^-(\Sigma_2)$, violating condition (iv) of the definition. Thus the spacetime is retrodiction unfriendly.}
    \label{fig:Retrodictionunfriendly}
\end{figure}

A formal definition may be stated as follows:

\begin{definition}[Retrodiction Friendly]
    Let $(M,g)$ be a non-totally imprisoning spacetime with connected, achronal, edgeless sets $\Sigma_1, \Sigma_2 \subset I^+(\Sigma_1)$ such that $\Sigma_1$ is closed and non-compact, and (i) $J^+(K)\cap \Sigma_2$ has compact closure, where $K = \Sigma_1 - D^-(\Sigma_2)$, (ii) $J^-(Q) \cap \Sigma_1$ has compact closure where $Q=\Sigma_2 - D^+(\Sigma_1)$ , and (iii) $\Sigma_1 - V \subset J^-(\Sigma_2)$ where $V$ is a compact subset of $\Sigma_1$. This spacetime is \textit{retrodiction friendly} if, for any such $\Sigma_1$ and $\Sigma_2$, (iv) $D^+(\Sigma_1)- J^+(\Sigma_2)\cup J^-(\Sigma_2) = \emptyset$. The spacetime is \textit{retrodiction unfriendly} if and only if it is not retrodiction friendly.
\end{definition}

Figure \ref{fig:Retrodictionunfriendly} depicts the regions used in the above definition for evaporation-Schwarzschild, along with a point within $D^+(\Sigma_1)- J^+(\Sigma_2)\cup J^-(\Sigma_2)$, violating condition (iv). The definition of prediction friendliness is the time reverse of the definition of retrodiction friendliness.

Retrodiction unfriendliness is intended to identify spacetimes in which retrodictability fails due to black hole evaporation. Conditions (i), (ii), and (iii) ensure that any failures of predictability are due to the black hole evaporation. Conditions (i) and (ii) use regions $K$ and $Q$, which are the subsets of $\Sigma_1$ and $\Sigma_2$ that are not determined by the other achronal slice, and thus represent the failures of predictability. The compactness conditions associated with the causal developments of these regions ensure the failures of predictability are confined to a finite region, because black hole evaporation occurs in a finite region. For example, anti-de Sitter spacetime without reflecting boundary conditions fails to be predictable because information can come in from infinity. Conditions (i) and (ii) rule out anti-de Sitter because its failure of predictability is due to asymptotic spacetime structure, rather than due to an evaporating black hole. Condition (iii) is needed because \citet{lesourd2018causal} found examples of spacetimes that satisfy conditions (i) and (ii), and that violate condition (iv), but for which the violation is not due to a point being causally isolated behind a singularity, and indeed bear no resemblance to black hole evaporation. These examples are such that $J^-(\Sigma_1) \cap \Sigma_2 = \emptyset$ and $J^-(\Sigma_2) \cap \Sigma_1 = \emptyset$, so neither slice can be said to be in the future of the other, a circumstance that does not arise due to black hole evaporation. Condition (iii) disqualifies spacetimes such as these. 

Condition (iv) encodes the fact that there is no region causally isolated from $\Sigma_2$ (an achronal slice in the future) but determined by $\Sigma_1$ (an achronal slice in the past). This feature of evaporation-Schwarzschild --- a region causally isolated from a future achronal slice --- is the asymmetric causal structure that suggests retrodiction, not prediction, fails in black hole evaporation, as discussed above.
 
Checking a few more standard spacetimes: globally hyperbolic spacetimes are prediction and retrodiction friendly; Minkowski spacetime with a point removed is also; maximally extended Kerr and Reissner-Nordstr\"om are also;  Minkowski spacetime with a compact spacelike 3-surface removed is prediction and retrodiction unfriendly; finally, evaporation-Schwarzschild is retrodiction unfriendly but prediction friendly.\footnote{Note $\Sigma_1$ and $\Sigma_2$ must be connected, so retrodiction friendliness means retrodictability from a single connected component of the Universe. Therefore, baby universe models are retrodiction unfriendly.}


That retrodiction unfriendliness is the appropriate spacetime property to attempt to ground an information paradox is further supported by the fact that retrodiction unfriendliness implies a pure-to-mixed state transition. A formal statement of the mixedness of a state on $\Sigma_2$ in a retrodiction unfriendly spacetime (given a causally well-behaved past) is the following, which I sketch a proof of in appendix \ref{app:proof}\footnote{I prove this theorem as a generalisation of an argument in \citet[sec. 3]{Belot1999hawking} that evaporation-Schwarzschild implies pure-to-mixed state transition. The proof of \citet{Belot1999hawking} is sufficient to establish the existence of a model of black hole evaporation that implies a pure-to-mixed state transition. However, a further question can be asked: `which black hole evaporation models will result in a pure-to-mixed state transition?' The theorem provided here helps to answer this further question.}:

\begin{restatable}{theorem}{mixed}\label{thm:mixedness}

    Let $(M,g)$ be a retrodiction unfriendly spacetime, let $\mathcal{W}$ be a C*-algebra of observables associated with $J^+(\Sigma_1)\cup J^-(\Sigma_1)$ and $\omega$ the state on $\mathcal{W}$. If the following postulates hold:
    \begin{enumerate}
        \item $J^-(\Sigma_1)$ is such that for any subregions $S_1,S_2\subset \Sigma_1$, $J^-(S_1)\cap J^-(S_2) \neq \emptyset$.
        \item If two regions are spacelike separated, then their algebras of observables commute.
        \item If two spacelike separated regions $R_1$ and $R_2$ are such that $J^-(R_1)\cap J^-(R_2) \neq \emptyset$, then their associated algebras of observables $\mathcal{A}_1$ and $\mathcal{A}_2$ contain some elements $a_1\in \mathcal{A}_1$, $a_2\in \mathcal{A}_2$ such that $\omega(a_1a_2) \neq \omega(a_1)\omega(a_2)$.
    \end{enumerate}

    then $\omega_{\Sigma_2}$, the restriction of $\omega$ to the algebra of observables associated with $\Sigma_2$, is mixed. 
\end{restatable}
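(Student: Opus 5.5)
The plan is to turn the failure of condition (iv) into a region $R_1$, determined by $\Sigma_1$ but causally invisible from $\Sigma_2$, that is correlated with some region $R_2$ on $\Sigma_2$. This is the standard Belot-style argument: a state that is correlated with observables commuting with the whole algebra of $\Sigma_2$ cannot be pure on that algebra.

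First, since $(M,g)$ is retrodiction unfriendly, there exist slices $\Sigma_1,\Sigma_2$ satisfying (i)--(iii) together with a point $p\in D^+(\Sigma_1)-\big(J^+(\Sigma_2)\cup J^-(\Sigma_2)\big)$. Because $J^\pm(\Sigma_2)$ is a closed condition in nice cases, and more generally I would pass to $I^\pm$ and use openness, I will take a small open neighbourhood $R_1$ of $p$ inside $D^+(\Sigma_1)$ that is still disjoint from $J^\pm(\Sigma_2)$. Hence $R_1$ is spacelike separated from every region of $\Sigma_2$. Next, I choose a region $R_2\subset\Sigma_2$. The natural choice uses (iii): $\Sigma_1-V\subset J^-(\Sigma_2)$, so I take $R_2\subset\Sigma_2\cap D^+(\Sigma_1)$, which is nonempty by (ii) and the non-compactness of $\Sigma_1$. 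Then $J^-(R_1)\cap\Sigma_1$ and $J^-(R_2)\cap\Sigma_1$ are nonempty subregions $S_1,S_2$ of $\Sigma_1$. By postulate 1, $J^-(S_1)\cap J^-(S_2)\neq\emptyset$, so $J^-(R_1)\cap J^-(R_2)\neq\emptyset$. Postulate 3 then gives $a_1\in\mathcal{A}_1$ and $a_2\in\mathcal{A}_2\subset\mathcal{A}_{\Sigma_2}$ with $\omega(a_1a_2)\neq\omega(a_1)\omega(a_2)$.

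Second, suppose for contradiction that $\omega_{\Sigma_2}$ is pure. By postulate 2, $a_1$ commutes with all of $\mathcal{A}_{\Sigma_2}$, which I read as the algebra of a region spacelike to $R_1$. I would then invoke the standard C*-algebraic lemma: if $\omega$ restricts to a pure state on a subalgebra $\mathcal{B}$, then $\omega(ab)=\omega(a)\omega(b)$ for all $b\in\mathcal{B}$ and all $a$ in the commutant of $\mathcal{B}$ within $\mathcal{W}$. I would sketch this via the GNS representation. For positive $a$ in the commutant with $0\le a\le 1$, the functional $b\mapsto\omega(ab)$ is positive on $\mathcal{B}$ and dominated by $\omega_{\mathcal{B}}$. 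Purity then forces it to equal $\lambda\omega_{\mathcal{B}}$, and setting $b=1$ gives $\lambda=\omega(a)$. Linearity extends this to all of the commutant. Applied to $a_1$ and $a_2$, this contradicts postulate 3, so $\omega_{\Sigma_2}$ is mixed.

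I expect the main obstacle to be the geometric step producing $R_2$ with $J^-(R_1)\cap J^-(R_2)\ne\emptyset$, because it requires $R_1$ and $R_2$ to have nonempty past footprints $S_1,S_2$ on $\Sigma_1$. For $R_1$ this follows from $p\in D^+(\Sigma_1)$. For $R_2$, I need $\Sigma_2\cap D^+(\Sigma_1)\neq\emptyset$, which I would get from (iii) and the fact that $\Sigma_1-V$ is nonempty by non-compactness. I also need to handle the slight mismatch between causal pasts of regions and the footprint formulation of postulate 1. Since the statement is only to be sketched, I would treat the algebra-to-region assignments somewhat informally and flag where isotony is assumed.
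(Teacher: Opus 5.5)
Your proposal is correct at the paper's level of rigour and follows the paper's own sketch. A region $R_1$ near the point violating (iv) is spacelike to $\Sigma_2$, so postulate 2 gives commutation. Postulate 1 gives it a common past with $\Sigma_2$; the paper uses all of $\Sigma_2$ where you use a subregion $R_2$, and your footprint argument from (ii), (iii) and non-compactness of $\Sigma_1$ is fine. Postulate 3 then gives a correlation, and the cited Takesaki lemma rules out purity of $\omega_{\Sigma_2}$; you additionally prove that lemma, correctly, via the dominated-functional characterisation of purity.

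One caveat: thickening $p$ to an open neighbourhood ``via $I^\pm$ and openness'' does not work as stated. We have $\overline{J^\pm(\Sigma_2)}=\overline{I^\pm(\Sigma_2)}$, and a point on the event horizon of evaporation-Schwarzschild violates (iv) yet lies in $\overline{J^-(\Sigma_2)}$. The region should instead be sought inside $I^+(p)$, which avoids $J^-(\Sigma_2)$ automatically by transitivity. The paper itself only asserts in a footnote that ``a point implies a region'', so this is a gap you share with it rather than a flaw of your route.
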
 

Postulate 1 assumes that any two subregions of $\Sigma_1$ share a common past. Postulate 2 is a statement of microcausality, an axiom of axiomatic QFT.\footnote{AQFT is normally restricted to globally hyperbolic spacetimes; however, it can be extended to non-globally hyperbolic spacetimes (see \citet{kay1992principle, Yurtsever_1994, janssen2022quantum}). The core of the proof --- purity of a state implies no-correlation between commuting observables, or conversely, correlation between commuting observables implies mixedness --- should go through unproblematically. Proving this explicitly in the cited frameworks would be interesting, but given this proof is not necessary for the argument of this paper I do not carry out this check here.} Postulate 3, in the present context of Hawking radiation, encodes the entanglement between interior and exterior Hawking modes. More generally, it is supported by the genericity of vacuum correlations in QFT.

A corollary is then that if the state on $\Sigma_1$ is pure, then the evolution from $\Sigma_1$ to $\Sigma_2$ is a pure-to-mixed state transition, thus recovering the traditional formulation of the information paradox.\footnote{This argument does not necessarily imply that unitarity is the appropriate formulation of an initial value problem in some future theory of Planck-scale physics, but only further motivates the claim that from the semi-classical perspective, if there is a spacetime property relevant to the information paradox, it is retrodiction unfriendliness.} 

Evaporation-Schwarzschild is prediction friendly but retrodiction unfriendly. Because prediction and retrodiction friendliness are weaker than global hyperbolicity, it is possible to reject (CCH), but instead demand that all physically reasonable spacetimes are prediction and retrodiction friendly. A new question then arises: can an information paradox, analogous to that of Manchak and Weatherall, be given that is based upon failures of retrodiction friendliness? I consider this question in the next section. I argue that, if one wishes to restrict physically reasonable spacetimes to retrodiction friendly spacetimes, there is still no paradox because it is plausible that all physically reasonable black hole evaporation spacetimes are retrodiction friendly.


\section{Is Black Hole Evaporation Retrodiction Friendly?}\label{sec:retrodictability}

In this section I consider an analogous information paradox to that given by Manchak and Weatherall, formulated in terms of retrodiction friendliness. I state this new formulation in section \ref{ssec:resuscitating}. Then, in section \ref{ssec:BHEcanbeRetrodictionFriendly}, I argue that the premise analogous to the Kodama-Wald-Lesourd theorem in the new formulation is not well-motivated, thus undermining the grounds for the paradox.

\subsection{Resuscitating the Paradox}\label{ssec:resuscitating}

Recall Manchak and Weatherall's precise formulation of the information paradox, given in section \ref{sec:intro}. In section \ref{ssec:Planckscale physics} I argued that non-globally hyperbolic spacetimes are physically reasonable if one can expect a consistent theory of Planck-scale physics to `fix' the breakdown of predictability. Thus, we can resolve Manchak and Weatherall's paradox by rejecting (CCH) as too strong. However, in section \ref{ssec: Prediction and retrodiction friendliness}, I considered two weaker spacetime properties than global hyperbolicity: prediction and retrodiction friendliness. Violations of retrodiction friendliness diagnose failures of predictability not merely associated with singular points, but with causally isolated regions. Plausibly, then, prediction and retrodiction unfriendliness diagnose failures of predictability that a consistent theory of Planck-scale physics should not be assumed to fix. 

Manchak and Weatherall's paradox can then be resuscitated by replacing the second and third assumptions with\footnote{\citet{schneider2024role} also modifies Manchak and Weatherall's formulation of the paradox, emphasising that the assumptions presented are about our models in semi-classical gravity. Nothing said there is incompatible with my discussion. The considerations I present urge against demanding our semi-classical models be globally hyperbolic, which would equally dissolve Schneider's formulation of the paradox.}: 

\begin{enumerate}
    \item [*2.] (*CCH) All physically reasonable spacetimes are prediction and retrodiction friendly. 
\end{enumerate}

\begin{enumerate}
    \item [*3.] (*KWL) Some physically reasonable black hole evaporation spacetime is retrodiction unfriendly.  
\end{enumerate}
(*CCH) interestingly weakens (CCH), and (*KWL) strengthens (KWL) so that an inconsistency is recovered. 

One response to this resuscitated paradox is to reject (*CCH) for the same reason we rejected (CCH): one may think that a consistent theory of Planck-scale physics will recover retrodictable evolution, even if the semi-classical limit spacetime is retrodiction unfriendly. However, some physicists take retrodiction unfriendliness to signal failures of retrodictability not easily resolved by singularity resolution. Thus, it is worthwhile to see if an information paradox can be formulated based upon (*CCH). For the remainder of this paper, I will explore whether the resuscitated paradox can be avoided by rejecting (*KWL) instead of (*CCH).

The question we must answer is then: is (*KWL) true? According to Manchak and Weatherall's definition of a black hole evaporation spacetime, all black hole evaporation spacetimes are retrodiction unfriendly. They define a black hole evaporation spacetime, roughly, as one that contains a point determined by some achronal slice in the past but causally isolated from one in the future. To see the role of this definition, consider the full statement of the KWL theorem; following \citet{lesourd2018causal}:

\begin{theorem}[Kodama-Wald-Lesourd Theorem]

Let $(M,g)$ be a non-totally imprisoning spacetime with connected, achronal, edgeless sets $\Sigma_1, \Sigma_2 \subset I^+(\Sigma_1)$ such that $\Sigma_1$ is closed and non-compact. Suppose the following: (i) $J^+(K)\cap \Sigma_2$ has compact closure, where $K = \Sigma_1 - D^-(\Sigma_2)$, (ii) there is a point $p \in D^+(\Sigma_1)- J^+(\Sigma_2)\cup J^-(\Sigma_2)$, (iii) $\Sigma_1 - V \subset J^-(\Sigma_2)$ where $V$ is a compact subset of $\Sigma_1$, (iv) $J^-(Q) \cap \Sigma_1$ has compact closure where $Q=\Sigma_2 - D^+(\Sigma_1)$. Then $(M,g)$ is not causally simple.
\end{theorem}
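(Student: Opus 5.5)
The plan is to argue by contradiction: assume $(M,g)$ is causally simple, so that it is causal and $J^\pm(x)$ is closed for every $x$. Two standard consequences will be used throughout. First, $J^\pm(C)$ is closed for every compact $C$. Second, the spacetime is strongly causal, so for $p\in D^+(\Sigma_1)$ with $\Sigma_1$ closed and achronal, the set $J^-(p)\cap\Sigma_1$ is compact. The aim is to turn the point $p$ of hypothesis (ii) into a causal curve that is forced both to reach $\Sigma_2$ and to avoid it.

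\emph{Step 1: localise $p$ over $K$.} Set $C = J^-(p)\cap\Sigma_1$; it is compact and nonempty because $p\in D^+(\Sigma_1)$. I claim $C\subset K$. Suppose $c\in C$ also lies in $D^-(\Sigma_2)$. Concatenate a causal curve from $c$ to $p$ with any future-inextendible causal curve $\gamma$ from $p$. This gives a future-inextendible causal curve through $c$, so it meets $\Sigma_2$. If it meets $\Sigma_2$ before $p$, then $p\in J^+(\Sigma_2)$; if after, then $p\in J^-(\Sigma_2)$. Both contradict (ii). Hence $p\in J^+(K)$, and moreover every future-inextendible causal curve from $p$ misses $\Sigma_2$ entirely.

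\emph{Step 2: use the far region of $\Sigma_1$.} By (iii), all of $\Sigma_1-V$ lies in $J^-(\Sigma_2)$, and $\Sigma_1$ is non-compact, so $\Sigma_1\cap J^-(\Sigma_2)$ is nonempty and non-compact. Since $\Sigma_1$ is connected, the compact set $C\subset K$ and the region $\Sigma_1-V$ are joined within $\Sigma_1$. I would look at $\partial J^-(\Sigma_2)\cap\Sigma_1$, or equivalently at the boundary $\partial J^+(C)$. The latter is a closed achronal $C^0$ hypersurface, closed by causal simplicity, and its generators are past-directed null geodesics that either terminate on $C$ or are past-inextendible. Because $p\notin J^+(\Sigma_2)\cup J^-(\Sigma_2)$, the causal future of $C$ must separate a nonempty region from $\Sigma_2$.

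\emph{Step 3: produce the contradiction using (i) and (iv).} Hypothesis (i) makes $\overline{J^+(K)\cap\Sigma_2}$ compact, and (iv) makes $\overline{J^-(Q)\cap\Sigma_1}$ compact. With $J^+(C)\subset J^+(K)$, I would follow a future-inextendible causal curve from $p$. It stays in $D^+(\Sigma_1)$ region-wise yet never meets $\Sigma_2$. Non-total imprisonment forces it to leave every compact set. Meanwhile, the limit-curve theorem applied to causal curves from $\Sigma_1-V$ to $\Sigma_2$ approaching $C$, with compactness supplied by (i) and (iv), should give a limit causal curve from $C$ to a point of $\overline{J^+(K)\cap\Sigma_2}$. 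Closedness of $J^+$ then places that limit point in $J^+(C)\cap\Sigma_2$. Comparing this with the curve from $p$ (which lies to the future of $C$ but cannot meet the edgeless, achronal $\Sigma_2$) gives a causal curve through $p$ meeting $\Sigma_2$, contradicting Step 1.

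The main obstacle is Step 3. The difficulty is making the limit-curve argument rigorous: one must show the limiting curves actually pass through or near $p$ rather than merely through $C$, and identify exactly where causal simplicity, rather than mere strong causality, is essential. Lesourd's counterexamples show (iii) is genuinely needed here. My first attempt would be to work with $\partial J^-(\Sigma_2)$, which is achronal and closed, and show that it has a generator through a point of $J^-(p)$. If this generator is past-inextendible it contradicts $p\in D^+(\Sigma_1)$; if it ends it contradicts closedness of some $J^-(x)$.
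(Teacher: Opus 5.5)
The paper does not prove this theorem; it imports it from Lesourd. Your proposal therefore has to stand on its own, and it has a genuine gap: Steps 2 and 3 never identify how causal simplicity produces a contradiction.

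\textbf{What works and what fails.} Step 1 is correct and is the right first move. Two side remarks on it. First, you do not need $C$ to be compact. Second, as stated, the claim that strong causality and $p\in D^+(\Sigma_1)$ give $J^-(p)\cap\Sigma_1$ compact is false: take $p$ on a null hyperplane $\Sigma_1$ in Minkowski space.

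The endgame of Step 3 is a non sequitur. Exhibiting a causal curve from a point of $C$ to $\Sigma_2$ contradicts nothing.
\begin{itemize}
\item $c\in K$ only means that \emph{some} future-inextendible causal curve from $c$ misses $\Sigma_2$. That is compatible with $c\in I^-(\Sigma_2)$; in a correct proof one actually shows $C\subset I^-(\Sigma_2)$.
\item $J^+(p)$ is a proper subset of $J^+(C)$, so knowing that $J^+(C)$ meets $\Sigma_2$ tells you nothing about $p$. In evaporation-Schwarzschild, $C$ typically contains points outside the event horizon that do reach $\Sigma_2$, while $p$ does not.
\item The limit-curve step has no input. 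Curves starting on $\Sigma_1-V$ need not approach $C$, and their endpoints on $\Sigma_2$ need not lie in $J^+(K)$, so (i) gives them no compactness.
\item The fallback via $\partial J^-(\Sigma_2)$ runs in the wrong time direction: its generators are future-directed, and are future-inextendible or end on $\Sigma_2$. Neither alternative you list is contradictory. A past-inextendible causal curve through a point of $J^-(p)$ may simply meet $\Sigma_1$, and a generator ending on $\Sigma_2$ is the generic case.
\end{itemize}

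\textbf{The missing idea.} What is missing is a connectedness argument driven by two facts.
\begin{itemize}
\item[(a)] For achronal edgeless $\Sigma_2$, $J^\pm(\Sigma_2)=I^\pm(\Sigma_2)\cup\Sigma_2$. Edgelessness makes $I^+(\Sigma_2)\cup\Sigma_2\cup I^-(\Sigma_2)$ open. A null geodesic from a point of $J^-(\Sigma_2)\setminus I^-(\Sigma_2)$ to $\Sigma_2$ avoids $I^\pm(\Sigma_2)$, so a clopen argument along it shows it lies in $\Sigma_2$.
\item[(b)] For $z\in J^+(K)$, $z\in J^-(\Sigma_2)$ iff $z\in J^-(L)$, where $L=\overline{J^+(K)\cap\Sigma_2}$ is compact by (i). Causal simplicity makes $J^-(L)$ closed, so $J^-(\Sigma_2)$ is relatively closed in $J^+(K)$. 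This is exactly what fails along the event horizon of evaporation-Schwarzschild.
\end{itemize}

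With these, the argument runs as follows.
\begin{itemize}
\item Since $\Sigma_1\cap\Sigma_2=\emptyset$, (a) gives $\Sigma_1\cap J^-(\Sigma_2)=\Sigma_1\cap I^-(\Sigma_2)$, which is open in $\Sigma_1$.
\item It is also closed in $\Sigma_1$. Limits of points of $K$ are handled by (b). Limits of points of $D^-(\Sigma_2)$ are handled because every future-inextendible timelike curve from a point of $\overline{D^-(\Sigma_2)}$ meets $\Sigma_2$.
\item It is nonempty by (iii) and non-compactness of $\Sigma_1$. Connectedness of $\Sigma_1$ then gives $\Sigma_1\subset I^-(\Sigma_2)$.
\item Take a past-directed causal curve $\mu$ from $p$ to some $c\in C$. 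It lies in $J^+(K)$ and avoids $J^+(\Sigma_2)$. By (a) and (b), the set of parameters at which $\mu$ lies in $I^-(\Sigma_2)$ is open and closed, and it contains the endpoint $c$.
\item Hence $p\in I^-(\Sigma_2)$, contradicting (ii).
\end{itemize}
This route does not need (iv).
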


Causal simplicity is weaker than global hyperbolicity, so non-global hyperbolicity is proven.

The reader will have noticed that to define retrodiction friendliness I have just negated condition (ii) of the KWL theorem, and shuffled the presentation. 

Manchak and Weatherall define an evaporation spacetime as one which obeys the antecedent conditions of the KWL theorem. (KWL) then follows from the theorem. However, nothing is proven to establish (*KWL), which is assumed by definition.\footnote{As discussed in footnote \ref{fn:aproposMaudlin}, \citet{Manchak2018-MANPRA-5} are responding to a specific argument of \citet{maudlin2017information} and for this purpose the definition they use is entirely appropriate.} That is, the KWL theorem proves retrodiction unfriendliness implies a failure of global hyperbolicity, and retrodiction unfriendliness is assumed for black hole evaporation spacetimes.\footnote{\citet{lesourd2018causal} proves a stronger theorem with different antecedent conditions, but this alternative theorem also assumes an event horizon, and so does no better. An interesting open question is whether retrodiction friendliness and non-global hyperbolicity imply geodesic incompleteness. Non-global hyperbolicity alone is insufficient. Anti-de Sitter spacetime is non-globally hyperbolic and non-singular, so is a counterexample, but the niceness conditions (i) and (ii) of the definition of retrodiction friendliness mean this spacetime is not retrodiction friendly. It is plausible that the conjunction of non-global hyperbolicity with retrodiction friendliness is sufficient. If this is the case, then all non-globally hyperbolic retrodiction friendly spacetimes require singularity resolution.}

Thus, an information paradox requires positive arguments for (*CCH) and (*KWL), but no argument for (*KWL) has been given. Therefore, in the next section I consider the question: should we accept (*KWL)? Or equivalently, are our best models of black hole evaporation retrodiction unfriendly?

\subsection{Black Hole Evaporation can be Retrodiction Friendly}\label{ssec:BHEcanbeRetrodictionFriendly}

Assuming black hole evaporation spacetimes are retrodiction unfriendly amounts to the strict belief that evaporating black holes have event horizons.\footnote{\label{fn:BHdefn}The semantic question: `what is the referent of the term ``black hole''?' has many inequivalent answers \citep{Curiel2019-CURTMD-2}. All the models considered here clearly represent something minimally black hole-like.} True, the idealized spacetimes we usually use to model black holes all have event horizons: viz. Schwarzschild, Kerr, Reissner-Nordstr\"om etc. But, doubt concerning the existence of event horizons for evaporating black holes arrives on two horses. Firstly, although the apparent horizons of the `usual' stationary black hole spacetimes are null, when black hole evaporation is taken into account this apparent horizon will become timelike.\footnote{An apparent horizon is the boundary of the totally trapped region of a Cauchy surface. See \citet[p. 311]{Wald:1984rg}.} Given a timelike apparent horizon, there is a gap in the prison fence of the black hole boundary. Secondly, the singularities of black hole evaporation spacetimes should be resolved by some consistent theory of Planck-scale physics. This resolution can, potentially, allow worldlines to escape `through' the singularity. The upshot is that, although the usual suspects (Schwarzschild, Kerr, Reissner-Nordstr\"om, etc.) must be adequate for modelling the near-horizon physics of astrophysical black holes, we have reason to doubt their adequacy concerning global causal structure.\footnote{The arguments given in this section do not rule out evaporation-Schwarzschild as a model of semi-classical gravity; instead, they argue evaporation-Schwarzschild is astrophysically unreasonable, and more reasonable models are retrodiction friendly. Therefore, retrodictability is not secured for all semi-classical gravity models of black hole evaporation, and the problem remains at a conceptual level. This is analogous to the literature on closed timelike curves in general relativity, in which there are two separate sets of questions: are closed timelike curves admitted by models of general relativity? And are they admitted by astrophysically reasonable models? (E.g. \citet{Earman2009-WTHDTL, sep-time-machine, doboszewski2022rotating}). By resolving the information paradox for the second set of questions, the paradox is not necessarily resolved for the first set. I thank an anonymous reviewer for emphasising this point to me.}

Models of non-singular black holes can be traced back  at least to \citet{bardeen1968non}, with non-singular evaporating black hole models appearing at least as early as \citet{FROLOV1981}. A recent, widely discussed model is the \citet{Hayward2006} metric:

\begin{equation}\label{eq:Haywardmetric}
    ds^2 = -f(r) dt^2 + f(r)^{-1}dr^2 + r^2d\Omega^2
\end{equation}
where

\begin{equation}\label{eq:f(r)}
    f(r) = 1 - \frac{2mr^2}{r^3-2ml^2}
\end{equation}
where $l$ is a scale parameter. As $r\rightarrow \infty$ this metric approximates Schwarzschild, but deviates from it for $r \rightarrow 0$. 

The study of regular black holes is now an active research programme in physics. Although the Hayward metric, equation \eqref{eq:Haywardmetric}, is flat at $r=0$, the geodesic completeness of its maximal extension is not secure \citep{carballo2020geodesically}. However, if one uses a time-dependent mass term, and includes collapsing matter, then geodesic completeness may be recovered.\footnote{This is suggested by numerical computations of \citet{Schindler_Aguirre_Kuttner_2020}.} Regardless, alternative geodesically complete black hole models can be given (see \citet{carballo2020geodesically} and citations therein), with some well motivated by quantum gravity considerations \citep{Rovelli_2014, rovelli2024planck}. These spacetimes avoid singularities by violating the energy conditions of the singularity theorems, which is known to occur in quantum field theories, and specifically in the presence of evaporating black holes.\footnote{\citet{carballo2020geodesically} provide a taxonomy of ways of violating the energy conditions and recovering geodesically complete black hole models. Recent work \citep{wall2013generalized, bousso2023quantum} has formulated `quantum singularity theorems', which relax certain assumptions of Penrose's original work, and derive a `quantum singularity'. A full analysis of how these theorems interact with the zoology of regular black hole models is a worthwhile endeavour that goes beyond the scope of the present paper, but note that \citet{rovelli2024planck} reject the `central dogma', which is assumed in these quantum singularity theorems.} The geodesically complete evaporating black hole models discussed are globally hyperbolic by construction and as such these models are also retrodiction friendly.

This exposition is necessarily cursory; however, the models discussed should be seen as a proof of concept, rather than a concrete proposal: retrodiction friendly models of black hole evaporation in semi-classical gravity are reasonable.\footnote{An alternative class of models, motivated by `brick-wall' models \citep{hooft1985quantum}, the membrane paradigm \citep{thorne1986black}, or string-theoretic fuzzballs \citep{mathur2005fuzzball}, might modify black hole models by deleting the black hole interior. These models would then be retrodiction friendly. It is unclear whether such deleting procedures really give rise to reasonable models, especially in the case of the membrane paradigm in which there is no reason to think spacetime fails to emerge within the event horizon.} Note that, given these models are also globally hyperbolic, they could also be leveraged to undermine Manchak and Weatherall's original formulation of the information paradox.

Proofs of concept notwithstanding, given the speculative nature of any given non-singular evaporating black hole model, one may reasonably still prefer the Schwarzschild model. In this case, the best model of black hole evaporation would still be retrodiction unfriendly. However, there are examples of evaporating black hole spacetimes without event horizons, independent of any particular approach to singularity resolution; indeed they are singular spacetimes. These are deidealized models for which charge and rotation are non-zero.

The spacetime that describes a charged, rotating, stationary  black hole (i.e. a Kerr-Newman spacetime) has a timelike singularity in a black hole region bounded by an event horizon \citep[sec. 12.3]{Wald:1984rg}. Because of this timelike rather than spacelike singularity, modelling evaporation for charged and/or rotating black holes changes the causal structure significantly. \citet{kaminaga1990dynamical} and \citet{parikh1999global} study the causal structure of an evaporating charged, non-rotating black hole. They model a collapsing null shell which transitions to a charged Vaidya metric when the shell crosses the radius at which a black hole would be expected to form. Vaidya metrics are essentially black hole spacetime metrics but with a time-dependent mass term. The charged Vaidya metric is then allowed to evaporate and is finally replaced by a Minkowski metric once the mass term goes to zero. The authors find that both the singularity and the apparent horizon are timelike, and so there is no event horizon. The conformal diagram for this spacetime is given by \citet{parikh1999global} and is displayed in figure \ref{fig:EvaporationRN}. \citet{braunstein2021information} repeat this analysis for an evaporating, charged, rotating black hole and identify the same global causal structure.

These deidealized models of evaporating black holes each possess a naked singularity and thus fail to be globally hyperbolic. However, because they do not have event horizons, they violate condition (iv) of the KWL theorem. Therefore, these spacetimes are retrodiction (and prediction) friendly and so undermine (*KWL).\footnote{\citet{parikh1999global} suggest a further implication of these models: that the semi-classical spacetime can be `predicted' by the imposition of boundary conditions at the singularity. They write: ``The entire spacetime can be predicted from initial conditions if boundary conditions at the singularity are known'' (p. 1). Moreover, they suggest this would not be possible in evaporation-Schwarzschild with its spacelike singularity: ``it is not obvious how a more complete dynamical theory could replace them with something more natural'' (p. 1). Substantiating these claims is an exciting path for future research. If correct, it may further support the claim that retrodiction friendly spacetimes are more conducive to predictable evolution than retrodiction unfriendly spacetimes, thus further bolstering the arguments of section \ref{ssec: Prediction and retrodiction friendliness}. However, this remains to be seen.}

\begin{figure}
\centering
\begin{tikzpicture}[scale=2]
  
  \coordinate (O) at ( 0, 0); 
  \coordinate (S) at ( 0,-1); 
  \coordinate (N) at ( 0, 1); 
  \coordinate (T) at ( 0, 0.5); 
  \coordinate (B) at ( 0, -0.5); 
  \coordinate (E) at ( 1, 0); 

  \coordinate (X1) at ( 0.125, -0.625); 
  \coordinate (X2) at ( 0.25, -0.75); 
  \coordinate (X3) at ( 0.875, 0.125); 
  \coordinate (X4) at ( 0.25, 0.75); 

  \node[above=1,above left=0,mydarkblue,align=center] at (O)
    {$r=0$};
  \node[right=3,mydarkblue,align=center] at (1,0.02)
    {$i^0$};
  \node[below=3,,mydarkblue,align=left] at (0.02,-1)
    {$i^-$};
  \node[above=3,,mydarkblue,align=left] at (0.02,1)
    {$i^+$};
  \node[mydarkblue,above right,align=right] at (57:0.71)
    {$\If$};
  \node[mydarkblue,below right,align=right] at (-60:0.68)
    {$\Ip$};


    \draw[world line] (X1) to [out = 45, in = -45] (T);
  
  \draw[thick,mydarkblue] (T) -- (N) -- (E) -- (S) -- (B) ;

   \draw[singularity] (B) -- (T);

   \draw[particle] (X2) -- (B);

   \draw[photon2](X1) -- (X3);

   \draw[photon2](T) -- (X4);

   \fill[pattern = {crosshatch dots}] (X1) to [out = 45, in = -45] (T) to (B) to (X1) ;

\end{tikzpicture}
\caption{Conformal diagram for an evaporating charged black hole. The dotted area is bounded by the apparent horizon. The line that intersects the bottom of the singularity is a collapsing null shell. The squiggly lines are the first and last emissions of Hawking radiation.}
    \label{fig:EvaporationRN}
\end{figure}

The strength of this argument against (*KWL) depends upon the accuracy of these models for representing charged and/or rotating black holes. However, the conclusions drawn should not be surprising. In the Schwarzschild case the singularity is spacelike, so there is a region of spacetime within which every future inextendible causal worldline must `intersect' the singularity and not reach $\If$. On the other hand, for a timelike singularity there is no such region. In addition, given a black hole evaporates, its apparent horizon will be timelike.  Combining the fact that causal worldlines are not forced into the timelike singularity, with the fact that the apparent horizon is timelike and shrinks, it follows that causal worldlines should be able to escape from within the apparent horizon as it evaporates past them. Therefore, in such spacetimes we should not expect there to be causally isolated regions. As such, the causal structure of the deidealized models of evaporating black holes aligns with our expectations for evaporating Kerr-Newman black holes.\footnote{Recent work \citep{brown2024evaporation} has suggested that the semi-classical approximation fails for highly charged black holes, even near the event horizon and at early times. Such suggestions deserve further investigation, so I restrict myself here to a few comments. The quantum gravity modifications discussed in the cited paper do not seem to affect the causal structure of charged, evaporating black holes, and so figure \ref{fig:EvaporationRN} may still be a good model of the relevant causal structure. Moreover, assuming astrophysical black holes are nearly neutral (given they are formed by nearly neutral collapsing stellar matter) semi-classical approximations may still be reliable. Finally, if the appropriate conclusion is that classical spacetime models of black hole evaporation are invalidated, then the paradox is still avoided by a rejection of (BHE).}


Evaporation-Schwarzschild remains retrodiction unfriendly, but it should be regarded as unphysical because \textit{all} astrophysical black holes are expected to have non-zero charge and rotation, even if these are negligible. Black holes will decharge and spin-down (e.g. \citet{Page1976}, \citet{HiscockWeems1990}) but modelling their charge and rotation as exactly zero is an idealization. The Schwarzschild metric is unstable to arbitrarily small perturbations of the charge and rotation parameters, so we should not take phenomena unique to a Schwarzschild black hole too seriously. Therefore, we do not have good reasons to believe (*KWL).  

There is a further complication: the timelike singularities of the Kerr and Reissner-Nordstr\"om metrics are expected to be unstable due to the `blue-shift heuristic' arguments of \citet{Penrose:1979}. The upshot of the arguments is that the curvature along the Cauchy horizon induced by these singularities should diverge due to an infinite blue-shift of the field modes along it. This divergence leads to the Cauchy horizon itself being singular. If these arguments are correct, then charged and rotating black holes should be expected to have null, or perhaps even spacelike, singularities, reviving worries that our models of black hole evaporation exhibit event horizons, and thus are retrodiction unfriendly. 

Three questions need to be answered to determine if instability arguments induce retrodiction unfriendliness:

\begin{enumerate}
    \item[A.] Do instability arguments apply to collapse-formed, evaporating, rotating, charged black holes?
    \item[B.] If so, does the instability generate spacelike or null singularities?
    \item[C.] If null, then are event horizons induced?
\end{enumerate}

In appendix \ref{app:instability} I sketch the blue-shift arguments and discuss the status of each of these questions in the physics literature at some length. Summarising that discussion: none of the questions has an easy or known answer. Moreover, each depends upon the behaviour of quantum fields in spacetime regions of arbitrarily large curvature near singularities, or in regions experiencing infinite blue-shift. These domains require a consistent theory of Planck-scale physics, of which we are ignorant. 

Where does this leave us? We have reasonable grounds to think that our best models of black hole evaporation will be retrodiction friendly, and thus there is no paradox. However, we cannot be confident of this conclusion: the causal structure of our best models of black hole evaporation is an outstanding issue, one for physics to decide. Significantly, whether or not our best models of black hole evaporation are retrodiction unfriendly appears to be sensitive to Planck-scale physics. This is obvious in the case of the Hayward model, but it is also the case for evaporating, charged, rotating black holes. So whether such spacetimes are retrodiction unfriendly or not turns on Planck-scale physics, of which we are ignorant. Once again, whether there is a paradox or not has been reduced to our ignorance of Planck-scale physics, and there is no paradox in ignorance. For a paradox we require a positive, well-supported argument rather than simply a large question mark, and it is hard to imagine having such an argument without knowing the physics of singularities. 

One should not be too pessimistic: we may figure out good answers to the above three questions in quantum field theory on curved spacetime. We are not close to being in this position, but it is not incomprehensible. Either way, the situation right now is one of ignorance, and so the paradox of predictability lacks the appropriate epistemic content to deliver the goods.\footnote{This position is in sympathy with the analysis of the information paradox given by \citet{Belot1999hawking}; there are interesting open questions for physics to decide, but it is doubtful that the proclamations of our best current physics are paradoxical.}

Finally, this paper has restricted its attention to formulations of the information paradox based upon predictability, expressed in terms of global spacetime structure. This does not rule out other formulations of the information paradox. For example, nothing I have said here does obvious damage to the Page-time paradox \citep{wallace_2020}, and relatedly if information is to escape the apparent horizon it will likely do so from the Planck scale, possibly implying paradoxical anti-thermodynamic behaviour in the final stages of black hole evaporation. The relationship between the analysis presented here and these other formulations of the paradox will have to wait for future work.

\section{Conclusions}

There is no information paradox due to failures of global hyperbolicity, because some non-globally hyperbolic spacetimes are physically reasonable given our expectations about Planck-scale physics. There is no paradox due to failures of retrodiction friendliness because our deidealized models of black hole evaporation are retrodiction friendly. Further work may establish that these deidealized models are also too idealized, and the instability of Cauchy horizons leads to retrodiction unfriendliness and hence a paradox. However, we are currently ignorant about whether this is the case, and it is unclear whether anything other than a consistent theory of Planck-scale physics can provide answers to these questions. We thus lack a positive argument for a paradox based upon predictability. Therefore, given the current state of our understanding of black hole evaporation, a paradox must be based upon some other feature of the evaporation process.

\begin{appendix}

\section{Proofs of Conformal Inequivalence and Mixedness}\label{app:proof}

\begin{definition}[Conformal Isometry] 
See \citet[p. 443]{Wald:1984rg}.
\end{definition}

\cfinequiv*
\begin{proof}
    For the purpose of contradiction, assume $\psi$ as above exists. Let $\gamma_1$ be a null curve in $M$ from $\Ip$ to the evaporation event, as shown in figure \ref{fig:proof_MGHD}. Let $\mathcal{A} = J^+(\gamma_1) \subset M$. 

    Suppose there exists $p \in N$ such that $p \not\in \psi(\mathcal{A})$ and $p \in J^+(\psi(\mathcal{A}))$. Then $\psi^{-1}(p) \not\in \mathcal{A}$ and $\psi^{-1}(p) \in J^+(\mathcal{A}) = \mathcal{A}$, which is a contradiction so no such $p$ exists. 

    Therefore, there are no points in the causal future of $\psi(\mathcal{A})$ not contained in $\psi(\mathcal{A})$, so this region must be bounded to the future by $\If$ and the singularity. 

    Now consider another null curve $\gamma_2 \subset \mathcal{A}$ which originates at $\Ip$ and ends at the Cauchy horizon. Define $\mathcal{A}_2 = J^+(\gamma_2)$. By the same argument $\psi(\mathcal{A}_2)$ is bounded to the future by $\If$ and the singularity. Therefore, a null curve $\gamma_3 \subset \mathcal{A}$ such that $\gamma_3 \subset J^-(\gamma_2)$ and that ends at the Cauchy horizon must be such that $\psi(\gamma_3)$ ends at the singularity. By constructing further null curves it follows $\psi(\gamma_2)$ must also end at the singularity. 
    
    However, the singularity is spacelike so for any two curves which hit it, one is not contained in the causal past of the other. Thus, $\psi(\gamma_3) \not\subset J^-(\psi(\gamma_2))$, and so $\psi$ cannot be a conformal isometry, contradicting the assumption.

\begin{figure}
    \centering
    \begin{tikzpicture}[scale=2.5]
  
  \coordinate (O) at (0, 0); 
  \coordinate (NE)  at (0.5, 0.5); 
  \coordinate (NN) at (0.5, 1); 
  \coordinate (S)  at ( 0,-1); 
  \coordinate (N)  at ( 0, 0.5); 
  \coordinate (E)  at ( 1.25, 0.25); 
  \coordinate (X)  at ( 0.2, 0.5); 
  \coordinate (Y1) at (0, -0.3); 
  \coordinate (Y2) at (0.5, 0.6);
  \coordinate (Y3) at (0.75, 0.75);

  \draw[singularity] (N) -- node[above] {} (NE);
  
  \draw[thick,mydarkblue] (Y3) -- (E) -- (S) -- (N);
  \draw[thick,mydarkblue] (O) -- (NE);

    \path (O) -- (NE);

    \draw[dashed] (NE) -- (Y3);

  \node[above=0,left=1,mydarkblue] at (O) {$r=0$};
  \node[above=1,right=1,mydarkblue] at (E) {$i^0$};
  \node[right=1,below=1,mydarkblue] at (S) {$i^-$};
  \node[right=1,below=1,mydarkblue] at (S) {$i^-$};

  \node[mydarkblue,above right=-1] at (0.9,0.6) {$\If$};
  \node[mydarkblue,below right=-1] at (0.75,-0.3) {$\Ip$};

\tikzset{mylabel/.style  args={at #1 #2  with #3}{
    postaction={decorate,
    decoration={
      markings,
      mark= at position #1
      with  \node [#2] {#3};
 } } } }

    \draw[particle, mylabel=at 0.5 left with {$\gamma_1$}] (1, 0) -- (NE);

    \node[mygreen] at (0.9,0.35) {$\mathcal{A}$};
\end{tikzpicture}
\caption{MGHD of $\Ip$ of evaporation-Schwarzschild with $\gamma_1$ and $\mathcal{A}$ as discussed in the appendix.}\label{fig:proof_MGHD}
\end{figure}
\end{proof}

\mixed*

This theorem follows from the following lemma \citep[p. 210]{Takesaki1979}:

\begin{lemma}\label{thm:lemma}
If $\mathcal{A}$ is a C* -sub-algebra of a C*-algebra $\mathcal{B}$ and if the restriction $\omega_A$ of a
state $\omega$ on $\mathcal{B}$ to $\mathcal{A}$ is pure, then $\omega(ab)=\omega(a)\omega(b)$ for all $a \in \mathcal{A}$ and for all $b \in \mathcal{B}$ that commute with all $a\in\mathcal{A}$. 
\end{lemma}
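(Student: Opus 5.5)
The plan is to reduce to the case where $b$ is a positive contraction, and then use the defining extremality property of pure states. A pure state dominates a positive functional only if that functional is a scalar multiple of it. Throughout I adjoin a unit to $\mathcal{B}$ if necessary, writing $1$ for it. I extend $\omega$ to the unitization as a state in the standard way, so that $\omega(1)=1$.

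\emph{Reductions.} Fix $b\in\mathcal{B}$ commuting with every element of $\mathcal{A}$. Since $\mathcal{A}$ is closed under $^*$, $b^*$ also commutes with $\mathcal{A}$. Hence so do $b_1=\tfrac12(b+b^*)$ and $b_2=\tfrac{1}{2i}(b-b^*)$. Because the desired identity $\omega(ab)=\omega(a)\omega(b)$ is linear in $b$, it suffices to treat self-adjoint $b$. Replacing $b$ by $(b+\|b\|1)/(2\|b\|)$ preserves both commutation with $\mathcal{A}$ and the identity, the latter by linearity and $\omega(1)=1$. So we may assume $0\le b\le 1$. Then $b^{1/2}$ and $(1-b)^{1/2}$ are norm limits of polynomials in $b$ and $1$, so they also commute with $\mathcal{A}$.

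\emph{Domination.} Define $\varphi(a)=\omega(ab)$ for $a\in\mathcal{A}$. For $a\in\mathcal{A}$ we get
\[
\varphi(a^*a)=\omega(b^{1/2}a^*a\,b^{1/2})\ge 0
\quad\text{and}\quad
\omega_{\mathcal{A}}(a^*a)-\varphi(a^*a)=\omega\bigl((1-b)^{1/2}a^*a(1-b)^{1/2}\bigr)\ge 0.
\]
Thus $0\le\varphi\le\omega_{\mathcal{A}}$ as positive functionals on $\mathcal{A}$. Since $\omega_{\mathcal{A}}$ is pure, the standard characterization of pure states gives $\varphi=\lambda\,\omega_{\mathcal{A}}$ for some $\lambda\in[0,1]$. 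This step is the crux. If one takes purity to mean extremality among states, I would prove the characterization directly. For $0<\varphi<\omega_{\mathcal{A}}$, write $\omega_{\mathcal{A}}=\|\varphi\|\cdot\frac{\varphi}{\|\varphi\|}+\|\omega_{\mathcal{A}}-\varphi\|\cdot\frac{\omega_{\mathcal{A}}-\varphi}{\|\omega_{\mathcal{A}}-\varphi\|}$. This is a convex combination of states, because norms of positive functionals are additive. Extremality then forces $\varphi/\|\varphi\|=\omega_{\mathcal{A}}$.

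\emph{Identifying $\lambda$.} It remains to show $\lambda=\omega(b)$; I expect this to be the only delicate point when $\mathcal{A}$ is non-unital. Take an approximate unit $(e_i)$ of $\mathcal{A}$. Because $\omega_{\mathcal{A}}$ is a state on $\mathcal{A}$, we have $\omega(e_i)\to 1$. Cauchy--Schwarz then gives
\[
|\omega((1-e_i)b)|^2\le\omega((1-e_i)^2)\,\omega(b^*b)\le\omega(1-e_i)\,\|b\|^2\to 0.
\]
Hence $\lambda=\lim_i\lambda\,\omega(e_i)=\lim_i\varphi(e_i)=\lim_i\omega(e_ib)=\omega(b)$. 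Therefore $\omega(ab)=\omega(a)\omega(b)$ for all $a\in\mathcal{A}$. Undoing the reductions by linearity gives the lemma for arbitrary $b$ in the relative commutant of $\mathcal{A}$ in $\mathcal{B}$.
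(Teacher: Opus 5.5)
Your proof is correct and complete: the reduction to $0\le b\le 1$ via the unitization, the squeeze $0\le\varphi\le\omega_{\mathcal{A}}$ for $\varphi(a)=\omega(ab)$, the extremality argument forcing $\varphi=\lambda\,\omega_{\mathcal{A}}$, and the approximate-unit/Cauchy--Schwarz step giving $\lambda=\omega(b)$ all go through as written. The paper gives no argument of its own for this lemma and simply cites Takesaki; your domination argument, in which $\varphi$ is dominated by the pure state and is therefore proportional to it, is the standard proof of this fact.
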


Armed with this lemma, I now sketch of proof of Theorem \ref{thm:mixedness}:
\begin{proof}

By retrodiction unfriendliness there is a region\footnote{The definition of retrodiction unfriendliness only assumes there is a point, but a point implies a region.} of spacetime $R$ in the causal future of $\Sigma_1$ but spacelike separated from $\Sigma_2$, so by postulate 2 the respective algebras of observables commute. By retrodiction unfriendliness and postulate 1, $R$ and $\Sigma_2$ have a common causal past, so by postulate 3 some elements in their respective algebras of observables are correlated. By Lemma \ref{thm:lemma} if $\omega_{\Sigma_2}$ is pure, then no such correlation can exist, hence $\omega_{\Sigma_2}$ is mixed.
\end{proof}

\section{Instability of Timelike Singularities}\label{app:instability}

In this appendix I sketch the blue-shift arguments for Cauchy horizon instability, and then consider whether this induces retrodiction friendliness for black hole evaporation spacetimes. 

Consider the conformal diagram of a stationary Reissner-Nordstr\"om black hole, a subset of which is depicted in figure \ref{fig:RN}. Consider null rays being sent into the spacetime from $\Ip$ for advanced time $v\rightarrow\infty$. A timelike worldline in the black hole region $\BH$ propagating toward the right-hand segment of the Cauchy horizon $\mathcal{H}^C$ reaches the Cauchy horizon in finite proper time, and is extendible beyond the Cauchy horizon. However, an infinite number of the null rays sent from $\Ip$ will therefore reach the timelike worldline in finite proper time. Thus an infinite amount of energy is squeezed into finite time, leading to an infinite blue-shift along the Cauchy horizon. This infinite blue-shift should lead spacetime curvature to diverge on the Cauchy horizon and thus for it to become singular. 

\begin{figure}
    \centering
    \begin{tikzpicture}[scale=1.3]

  \coordinate (-O) at (-1, 0); 
  \coordinate (-S) at (-1,-1); 
  \coordinate (-N) at (-1, 1); 
  \coordinate (-W) at (-2, 0); 
  \coordinate (-E) at ( 0, 0); 
  \coordinate (O)  at ( 1, 0); 
  \coordinate (S)  at ( 1,-1); 
  \coordinate (N)  at ( 1, 1); 
  \coordinate (E)  at ( 2, 0); 
  \coordinate (W)  at ( 0, 0); 
  \coordinate (B)  at ( 0,-1); 
  \coordinate (T)  at ( 0, 1); 

  \coordinate (-SS) at (-1, -1.2);
  \coordinate (SS) at (1, -1.2);
  \coordinate (-NN) at (-1, 3);
  \coordinate (NN) at (1, 3);

  \draw[singularity] (-N) -- node[above,scale=0.7] {} (-NN);
  \draw[singularity] (N) -- node[above,scale=0.7] {} (NN);
  \draw[singularity] (-S) -- node[below,scale=0.7] {} (-SS);
  \draw[singularity] (S) -- node[below,scale=0.7] {} (SS);
  \draw[dashed] (-N) -- (NN) {};
  \draw[dashed] (N) -- node[pos=0.3, above right = -0.1cm, mydarkblue] {$\mathcal{H}^C$} (-NN) {};
  \draw[thick,mydarkblue] (-N) -- (-E) -- (-S) -- (-W) -- cycle;
  \draw[thick,mydarkblue] (N) -- (E) -- (S) -- (W) -- cycle;

   \node[above=1,right=1,mydarkblue] at (2,0) {$i^0$};

   \node[mydarkblue,above right=-1] at (1.5,0.5) {$\calI^+$};
   \node[mydarkblue,below right=-1] at (1.5,-0.5) {$\calI^-$};
  
  \node[inner sep=2] at (O) {I};
  \node[inner sep=2] at (0,1) {$\BH$};
  \node[inner sep=2] at (-1,0) {II};
  
\end{tikzpicture}
\caption{A subset of the conformal diagram for a stationary, charged black hole, modelled by the Reissner-Nordstr\"om metric.}\label{fig:RN}
\end{figure}

This is the core heuristic found in \citet[pp. 619-621]{Penrose:1979}. The instability of the Cauchy horizons for Reissner-Nordstr\"om and Kerr suggests the possible instability of Cauchy horizons generically. In particular, it may suggest the instability of the Cauchy horizon in the retrodiction friendly but non-globally hyperbolic models considered in section \ref{sec:retrodictability}. For instance the charged Vaidya metric, figure \ref{fig:EvaporationRN}, has a Cauchy horizon due to the timelike singularity. If perturbing this charged Vaidya metric induces a spacelike singularity, retrodiction unfriendliness will once again rear its head. 

To decide on this issue, three (difficult) questions require answers:

\begin{enumerate}
    \item[A.] Do instability arguments apply to collapse-formed, evaporating, rotating, charged black holes?
    \item[B.] If so, does the instability generate spacelike or null singularities?
    \item[C.] If null, then are event horizons induced?
\end{enumerate}

None of the answers to these questions are known. 

Concerning the first, the Cauchy horizon of the charged Vaidya metric does not necessarily induce an infinite blue-shift. To see this, note that the charged Vaidya metric does not have the causal structure of the Reissner-Nordstr\"om metric; specifically it does not have an infinite proper time being compressed into a region of finite proper time. The compression of the infinite set of signals into a ``flash'' \citep[p.621]{Penrose:1979} for an observer crossing the Cauchy horizon is the core of the blue-shift heuristics.

This is not to say that such arguments cannot be recovered in evaporating spacetimes, but whether they can is a non-trivial issue. The arguments supporting (or denying) the instability of Cauchy horizons have developed far beyond Penrose's initial heuristic (see references below), but these arguments are largely for stationary black holes. These arguments must be deidealized to the case of evaporating black holes and, as we have seen, evaporation can substantially change relevant facts about the causal structure. 

If the arguments do apply, then retrodiction unfriendliness does not immediately follow unless the instability generates spacelike singularities, which is the content of the second question. Given we do not know the answer to the first question \textit{a fortiori} we do not know whether the hypothetical horizon instabilities lead to spacelike or null singularities. But matters are even worse here because we do not even know if the induced singularities in stationary black holes are spacelike or null.

There are positive results claiming that the Cauchy horizon of a stationary charged or rotating black hole decays into a weak null singularity (\citet{Hiscock1977, PoissonIsrael1990, Ori1991, dafermos2001stability, dafermos2005interior, dafermos2014black}; for more citations see \citet{doboszewski2022rotating, chesler2019singularities}). But there are also results that suggest the development of a spacelike singularity \citep{burko1997structure, chesler2019numerical, chesler2019singularities}. Finally, all the above deals only with classical fields. For quantum fields, there is initial work in the recent literature which supports the instability, but it is insufficient to resolve the issue between null and spacelike singularities \citep{hollands2020quantum}. These are accepted as outstanding issues within the physics community. It would be a worthy project to assess whether the various analyses for stationary black holes survive deidealization to evaporating black holes. Limitations of scope prevent even a meagre attempt at such an analysis here. 

If the Cauchy horizon of an evaporating, charged, rotating black hole is unstable and if the induced singularity is spacelike, then the spacetime will be retrodiction unfriendly and a paradox of retrodictability follows. What about if the induced singularity is null? Then whether or not the spacetime is retrodiction unfriendly turns upon the form this null singularity takes, which is the content of the third question. 

We can distinguish two possible forms a null singularity could take, which are depicted in figure \ref{fig:null_singularity_forms}. One of these, on the left of the figure, has a null singularity along what would be the Cauchy horizon of the charged Vaidya metric, until this singularity meets the timelike apparent horizon and the black hole vanishes. This is a non-event-horizon-inducing null singularity. Alternatively, as depicted on the right of the figure, the null singularity could form along what would be the interior Cauchy horizon of the stationary Reissner-Nordstr\"om metric. This is an event-horizon-inducing null singularity. Only event-horizon-inducing null singularities lead to retrodiction unfriendliness.

\begin{figure}

\centering

 \begin{minipage}[h]{.4\textwidth}
   \begin{tikzpicture}[scale=2.2]
  
  \coordinate (O) at ( 0, 0); 
  \coordinate (S) at ( 0,-1); 
  \coordinate (N) at ( 0.5, 0.75); 
  \coordinate (T) at ( 0.5, 0.25); 
  \coordinate (B) at ( 0, -0.25); 
  \coordinate (E) at ( 1, 0); 

  \coordinate (X1) at ( 0.2, -0.45); 
  \coordinate (X2) at ( 0.375, -0.625); 

  \node[above=0,left=1,mydarkblue] at (B) {$r=0$};
  \node[right=3,mydarkblue,align=center] at (1,0.02)
    {$i^0$};
  \node[below=3,,mydarkblue,align=left] at (0.02,-1)
    {$i^-$};
  \node[above=3,,mydarkblue,align=left] at (N)
    {$i^+$};
  \node[mydarkblue,above right,align=right] at (30:0.8)
    {$\If$};
  \node[mydarkblue,below right,align=right] at (-60:0.68)
    {$\Ip$};


    \draw[world line] (X1) to [out = 45, in = -90] (T);
  
  \draw[thick,mydarkblue] (T) -- (N) -- (E) -- (S) -- (B) ;

   \draw[singularity] (B) -- (T);

   \draw[particle] (X2) -- (B);

   \fill[pattern = {crosshatch dots}] (X1) to [out = 45, in = -90] (T) to (B) to (X1) ;

\end{tikzpicture}
\end{minipage}
\begin{minipage}[h]{.4\textwidth}
 \begin{tikzpicture}[scale=2]
  \message{evaporation-Schwarzschild}

  \coordinate (O) at (0, 0); 
  \coordinate (NE)  at (0.5, 0.5); 
  \coordinate (NN) at (0.5, 1); 
  \coordinate (S)  at ( 0,-1); 
  \coordinate (N)  at ( 0, 1); 
  \coordinate (E)  at ( 1.25, 0.25); 
  \coordinate (X)  at ( 0.2, 0.8); 
  \coordinate (X1) at (0, -0.25);
  
  \draw[particle, fill = mylightblue] (N) to (S) to[out=77,in=-70] (X);
  
  \draw[singularity] (N) -- node[above] {} (NE);
  
  \draw[thick,mydarkblue] (NE) -- (NN) -- (E) -- (S) -- (N);
  \draw[thick,mydarkblue] (O) -- (NE);

    \path (O) -- (NE); 

  \node[above=0,left=1,mydarkblue] at (O) {$r=0$};
  \node[above=1,right=1,mydarkblue] at (E) {$i^0$};
  \node[right=1,below=1,mydarkblue] at (S) {$i^-$};
  \node[right=1,above=1,mydarkblue] at (NN) {$i^+$};
  \node[right=1,below=1,mydarkblue] at (S) {$i^-$};

  \node[mydarkblue,above right=-1] at (0.75,0.75) {$\If$};
  \node[mydarkblue,below right=-1] at (0.75,-0.3) {$\Ip$};

\draw[world line] (X1) to [out = 45, in = -90] (NE);

   \fill[pattern = {crosshatch dots}] (X1) to [out = 45, in = -90] (NE) to (O) to (X1) ;

\end{tikzpicture}
    \end{minipage}
    \caption{Left: A non-event horizon-inducing null singularity. Right: An event-horizon-inducing null singularity.}\label{fig:null_singularity_forms}
\end{figure}

Are the (putative) null singularities formed by the (putative) Cauchy horizon instability of evaporating, collapse-formed black holes event-horizon-inducing or not? As with the previous questions, we do not know. 

\end{appendix}

\begin{bmhead}[Acknowledgments.]
I am grateful to John Earman, Sam Fletcher, Nick Huggett, J.B. Manchak, Bryan Roberts, David Wallace, and Jim Weatherall for feedback and discussion at various stages of this project. I am also grateful to two anonymous reviewers for their helpful comments.
\end{bmhead}

\begin{bmhead}[Declarations.]
None to declare.
\end{bmhead}

\begin{bmhead}[Funding Information.]
None to declare.
\end{bmhead}

\bibliographystyle{psalike-v1.0}
\bibliography{citations}

@article{Manchak2018-MANPRA-5,
	author = {J. B. Manchak and James Owen Weatherall},
	title = "{(Information) Paradox Regained? A Brief Comment on Maudlin on Black Hole Information Loss}",
	year = {2018},
	publisher = {Springer},
	volume = {48},
	number = {6},
	doi = {https://doi.org/10.1007/s10701-018-0170-3},
	pages = {611--627},
	journal = {Foundations of Physics}
}

@incollection{Penrose:1979,
    author = {Penrose, R.},
    title = {Singularities and Time Asymmetry},
    booktitle = "{General Relativity: An Einstein Centenary Survey}",
    editors = {Hawking, S. and Israel, W.},
    publisher = {Cambridge University Press},
    pages = {581--638},
    year = {1979}
}

@article{Manchak2011-MANWIA,
	title = {What is a Physically Reasonable Space-Time?},
	volume = {78},
	number = {3},
	pages = {410--420},
	year = {2011},
	doi = {https://doi.org/10.1086/660301},
	journal = {Philosophy of Science},
	author = {John Byron Manchak}
}

@book{Wald:1984rg,
    author = "Wald, Robert M.",
    title = "{General Relativity}",
    doi = "https://doi.org/10.7208/chicago/9780226870373.001.0001",
    publisher = "Chicago Univ. Pr.",
    address = "Chicago, USA",
    year = "1984"
}

@article{Kodama:1979vm,
    author = "Kodama, Hideo",
    title = "{Inevitability of a naked singularity associated with the black hole evaporation}",
    reportNumber = "KUNS-500",
    doi = "https://doi.org/10.1143/PTP.62.1434",
    journal = "Prog. Theor. Phys.",
    volume = "62",
    pages = "1434",
    year = "1979"
}

@incollection{wald1984black,
  title = {Black holes, singularities and predictability},
  author = {Wald, Robert M},
  booktitle = {Quantum theory of gravity. Essays in honor of the 60th birthday of Bryce S. Dewitt},
  year = {1984},
  editor = {Christensen, S.},
  publisher = {Adam Hilger Ltd.}
}

@Misc{maudlin2017information,
  author        = {Maudlin, Tim},
  title         = {{(Information) Paradox Lost}},
  year          = {2017},
  archiveprefix = {arXiv},
  copyright     = {arXiv.org perpetual, non-exclusive license},
  doi           = {https://doi.org/10.48550/arXiv.1705.03541},
  eprint        = {1705.03541},
  primaryclass  = {physics.hist-ph},
  publisher     = {arXiv},
}

@article{Belot1999hawking,
author = {Belot, G. and Earman, J. and Ruetsche, L.},
title = "{The Hawking Information Loss Paradox: The Anatomy of Controversy}",
journal = {The British Journal for the Philosophy of Science},
volume = {50},
number = {2},
pages = {189-229},
year = {1999},
doi = {https://doi.org/10.1093/bjps/50.2.189},
}

@article{Hawking:1975vcx,
    author = "Hawking, S. W.",
    title = "{Particle Creation by Black Holes}",
    doi = "https://doi.org/10.1007/BF02345020",
    journal = "Commun. Math. Phys.",
    volume = "43",
    pages = "199--220",
    year = "1975",
    note = "[Erratum: \textit{Commun. Math. Phys.} 46, 206 (1976)]"
}

@incollection{wallace_2020, place = {Cambridge}, title = {Why Black Hole Information Loss Is Paradoxical}, DOI = {https://doi.org/10.1017/9781108655705.013}, booktitle = {Beyond Spacetime: The Foundations of Quantum Gravity}, publisher = {Cambridge University Press}, author={Wallace, David}, editor = {Huggett, Nick and Matsubara, Keizo and Wüthrich, ChristianEditors}, year = {2020}, pages = {209–236}}

@article{Hawking1976,
  title = {Breakdown of predictability in gravitational collapse},
  author = {Hawking, S. W.},
  journal = {Phys. Rev. D},
  volume = {14},
  issue = {10},
  pages = {2460--2473},
  numpages = {0},
  year = {1976},
  month = {Nov},
  publisher = {American Physical Society},
  doi = {https://doi.org/10.1103/PhysRevD.14.2460}
}

@Article{kay1992principle,
  author    = {Kay, Bernard S},
  journal   = {Reviews in Mathematical Physics},
  title     = {{The principle of locality and quantum field theory on (non globally hyperbolic) curved spacetimes}},
  year      = {1992},
  number    = {spec01},
  pages     = {167--195},
  volume    = {4},
  doi       = {https://doi.org/10.1142/s0129055x92000194},
  publisher = {World Scientific},
}

@article{Yurtsever_1994,
doi = {https://doi.org/10.1088/0264-9381/11/4/016},
year = {1994},
publisher = {},
volume = {11},
number = {4},
pages = {999},
author = {Ulvi Yurtsever},
title = {Algebraic approach to quantum field theory on non-globally-hyperbolic spacetimes},
journal = {Classical and Quantum Gravity}
}

@article{Geroch1970,
    author = {Geroch, Robert},
    title = "{Domain of Dependence}",
    journal = {Journal of Mathematical Physics},
    volume = {11},
    number = {2},
    pages = {437-449},
    year = {1970},
    doi = {https://doi.org/10.1063/1.1665157},
}

@incollection{geroch1977,
  title={Prediction in General Relativity},
  author={Geroch, Robert},
  booktitle={Foundations of Space-Time Theories. Minnesota Studies in the Philosophy of Science, vol. 8},
  editor = {Earman, J. and Glymour, C. and Stachel, J.},
  pages={81--93},
  year={1977},
  publisher={University of Minnesota Press}
}

@Article{Curiel2019-CURTMD-2,
  author  = {Erik Curiel},
  journal = {Nature Astronomy},
  title   = {{The Many Definitions of a Black Hole}},
  year    = {2019},
  pages   = {27--34},
  volume  = {3},
  doi     = {https://doi.org/10.1038/s41550-018-0602-1},
}

@Article{ChoquetBruhatGeroch1969,
  author    = {Yvonne Choquet-Bruhat and Robert Geroch},
  journal   = {Communications in Mathematical Physics},
  title     = {{Global aspects of the Cauchy problem in general relativity}},
  year      = {1969},
  number    = {4},
  pages     = {329 -- 335},
  volume    = {14},
  doi       = {https://doi.org/10.1007/bf01645389},
  publisher = {Springer},
}

@book{earman1995bangs,
	title = {Bangs, Crunches, Whimpers, and Shrieks: Singularities and Acausalities in Relativistic Spacetimes},
	year = {1995},
	author = {John Earman},
	publisher = {Oxford University Press USA}
}

@article{Rovelli_2014,
	doi = {https://doi.org/10.1142/s0218271814420267},
	year = 2014,
	volume = {23},
	number = {12},
	pages = {1442026},
  
	author = {Carlo Rovelli and Francesca Vidotto},
  
	title = {Planck stars},
  
	journal = {International Journal of Modern Physics D}
}

@article{Wald1980,
  title = {Quantum gravity and time reversibility},
  author = {Wald, Robert M.},
  journal = {Phys. Rev. D},
  volume = {21},
  issue = {10},
  pages = {2742--2755},
  numpages = {0},
  year = {1980},
  month = {May},
  publisher = {American Physical Society},
  doi = {https://doi.org/10.1103/PhysRevD.21.2742},
  url = {https://link.aps.org/doi/10.1103/PhysRevD.21.2742}
}

@book{birrell_davies_1982, place={Cambridge}, series={Cambridge Monographs on Mathematical Physics}, title={Quantum Fields in Curved Space}, DOI={https://doi.org/10.1017/CBO9780511622632}, publisher={Cambridge University Press}, author={Birrell, N. D. and Davies, P. C. W.}, year={1982}, collection={Cambridge Monographs on Mathematical Physics}}

@Article{lesourd2018causal,
  author    = {Lesourd, Martin},
  journal   = {Classical and Quantum Gravity},
  title     = {Causal structure of evaporating black holes},
  year      = {2018},
  number    = {2},
  pages     = {025007},
  volume    = {36},
  doi       = {https://doi.org/10.1088/1361-6382/aaf5f8},
  publisher = {IOP Publishing},
}

@article{Schindler_Aguirre_Kuttner_2020,
  title = "{Understanding black hole evaporation using explicitly computed Penrose diagrams}",
  author = {Schindler, Joseph C. and Aguirre, Anthony and Kuttner, Amita},
  journal = {Phys. Rev. D},
  volume = {101},
  issue = {2},
  pages = {024010},
  numpages = {22},
  year = {2020},
  month = {Jan},
  publisher = {American Physical Society},
  doi = {https://doi.org/10.1103/PhysRevD.101.024010},
  url = {https://link.aps.org/doi/10.1103/PhysRevD.101.024010}
}

@Article{parikh1999global,
  author    = {Parikh, Maulik K and Wilczek, Frank},
  journal   = {Physics Letters B},
  title     = {Global structure of evaporating black holes},
  year      = {1999},
  number    = {1-2},
  pages     = {24--29},
  volume    = {449},
  doi       = {https://doi.org/10.1016/s0370-2693(99)00071-4},
  publisher = {Elsevier},
}

@Article{kaminaga1990dynamical,
  author    = {Kaminaga, Yasuhito},
  journal   = {Classical and Quantum Gravity},
  title     = {{A dynamical model of an evaporating charged black hole and quantum instability of Cauchy horizons}},
  year      = {1990},
  number    = {7},
  pages     = {1135},
  volume    = {7},
  doi       = {https://doi.org/10.1088/0264-9381/7/7/011},
  publisher = {IOP Publishing},
}

@Article{braunstein2021information,
  author    = {Braunstein, S. L. and Das, S. and Wang, ZW},
  journal   = {International Journal of Modern Physics D},
  title     = {Information recovery from evaporating black holes},
  year      = {2021},
  number    = {09},
  pages     = {2150069},
  volume    = {30},
  doi       = {https://doi.org/10.1142/s0218271821500693},
  publisher = {World Scientific},
}

@incollection{crowther2022four,
	author = {Crowther, K. and De Haro, S.},
	booktitle = {The Foundations of Spacetime Physics: Philosophical Perspectives},
	editor = {Antonio Vassallo},
	pages = {223--250},
	publisher = {Routledge},
	title = {Four Attitudes Towards Singularities in the Search for a Theory of Quantum Gravity},
	year = {2022},
    doi       = {https://doi.org/10.4324/9781003219019-12}
}

@article{HiscockWeems1990,
  title = {Evolution of charged evaporating black holes},
  author = {Hiscock, William A. and Weems, Lance D.},
  journal = {Phys. Rev. D},
  volume = {41},
  issue = {4},
  pages = {1142--1151},
  year = {1990},
  doi = {https://doi.org/10.1103/PhysRevD.41.1142}
}

@article{Page1976,
  title = {Particle emission rates from a black hole. II. Massless particles from a rotating hole},
  author = {Page, Don N.},
  journal = {Phys. Rev. D},
  volume = {14},
  issue = {12},
  pages = {3260--3273},
  year = {1976},
  doi = {https://doi.org/10.1103/PhysRevD.14.3260}
}

@Article{PoissonIsrael1990,
  author  = {Poisson, Eric and Israel, Werner},
  journal = {Phys. Rev. D},
  title   = {Internal structure of black holes},
  year    = {1990},
  pages   = {1796--1809},
  volume  = {41},
  doi     = {https://doi.org/10.1103/physrevd.41.1796},
  issue   = {6},
}

@Article{Ori1991,
  author  = {Ori, Amos},
  journal = {Phys. Rev. Lett.},
  title   = {Inner structure of a charged black hole: An exact mass-inflation solution},
  year    = {1991},
  pages   = {789--792},
  volume  = {67},
  doi     = {https://doi.org/10.1103/physrevlett.67.789},
  issue   = {7},
}

@Book{dafermos2001stability,
  author    = {Dafermos, Michael Constantine},
  publisher = {Princeton University},
  title     = {{Stability and instability of the Cauchy horizon for the spherically symmetric Einstein-Maxwell-scalar field equations}},
  year      = {2001},
  doi       = {https://doi.org/10.4007/annals.2003.158.875},
}

@Article{dafermos2005interior,
  author    = {Dafermos, Mihalis},
  journal   = {Communications on Pure and Applied Mathematics: A Journal Issued by the Courant Institute of Mathematical Sciences},
  title     = {The interior of charged black holes and the problem of uniqueness in general relativity},
  year      = {2005},
  number    = {4},
  pages     = {445--504},
  volume    = {58},
  doi       = {https://doi.org/10.1002/cpa.20071},
  publisher = {Wiley Online Library},
}

@Article{dafermos2014black,
  author    = {Dafermos, Mihalis},
  journal   = {Communications in Mathematical Physics},
  title     = {Black holes without spacelike singularities},
  year      = {2014},
  pages     = {729--757},
  volume    = {332},
  doi       = {https://doi.org/10.1007/s00220-014-2063-4},
  publisher = {Springer},
}

@InCollection{doboszewski2022rotating,
  author    = {Doboszewski, Juliusz},
  booktitle = {The foundations of spacetime physics},
  publisher = {Routledge},
  title     = {Rotating black holes as time machines: An interim report},
  year      = {2022},
  pages     = {133--152},
  doi       = {https://doi.org/10.4324/9781003219019-7},
}

@Article{chesler2019numerical,
  author  = {Chesler, Paul M and Narayan, Ramesh and Curiel, Erik},
  journal = {Physical Review D},
  title   = {{Numerical evolution of shocks in the interior of Kerr black holes}},
  year    = {2019},
  number  = {8},
  pages   = {084033},
  volume  = {99},
  doi     = {https://doi.org/10.1103/physrevd.99.084033},
}

@Article{chesler2019singularities,
  author  = {Chesler, Paul M and Narayan, Ramesh and Curiel, Erik},
  journal = {Classical and Quantum Gravity},
  title   = {{Singularities in Reissner--Nordstr{\"o}m black holes}},
  year    = {2019},
  number  = {2},
  pages   = {025009},
  volume  = {37},
  doi     = {https://doi.org/10.1088/1361-6382/ab5b69},
}

@Article{burko1997structure,
  author  = {Burko, Lior M},
  journal = {Physical review letters},
  title   = {{Structure of the black hole's Cauchy-horizon singularity}},
  year    = {1997},
  number  = {25},
  pages   = {4958},
  volume  = {79},
  doi     = {https://doi.org/10.1103/physrevlett.79.4958},
}

@Article{hollands2020quantum,
  author  = {Hollands, Stefan and Wald, Robert M and Zahn, Jochen},
  journal = {Classical and Quantum Gravity},
  title   = {{Quantum instability of the Cauchy horizon in Reissner--Nordstr{\"o}m--deSitter spacetime}},
  year    = {2020},
  number  = {11},
  pages   = {115009},
  volume  = {37},
  doi     = {https://doi.org/10.1088/1361-6382/ab8052},
}

@Article{janssen2022quantum,
  author    = {Janssen, Daan W},
  journal   = {Communications in Mathematical Physics},
  title     = {Quantum fields on semi-globally hyperbolic space--times},
  year      = {2022},
  number    = {2},
  pages     = {669--705},
  volume    = {391},
  doi       = {https://doi.org/10.1007/s00220-022-04328-7},
  publisher = {Springer},
}

@article{Thebault_2023,
doi = {https://doi.org/10.1088/1361-6382/acb752},
url = {https://dx.doi.org/10.1088/1361-6382/acb752},
year = {2023},
volume = {40},
number = {5},
pages = {055007},
author = {Karim P Y Th\'ebault},
title = {Big bang singularity resolution in quantum cosmology},
journal = {Classical and Quantum Gravity}
}

@InProceedings{bojowald2007singularities,
  author    = {Bojowald, Martin},
  booktitle = {AIP Conference Proceedings},
  title     = {Singularities and quantum gravity},
  year      = {2007},
  number    = {1},
  pages     = {294--333},
  volume    = {910},
  doi       = {https://doi.org/10.1063/1.2752483},
}

@Article{Hiscock1977,
  author   = {Hiscock, William A.},
  journal  = {Phys. Rev. D},
  title    = {Stress-energy tensor near a charged, rotating, evaporating black hole},
  year     = {1977},
  pages    = {3054--3057},
  volume   = {15},
  doi      = {https://doi.org/10.1103/physrevd.15.3054},
  issue    = {10},
  numpages = {0},
}

@Article{Hayward2006,
  author   = {Hayward, Sean A.},
  journal  = {Phys. Rev. Lett.},
  title    = {Formation and Evaporation of Nonsingular Black Holes},
  year     = {2006},
  pages    = {031103},
  volume   = {96},
  doi      = {https://doi.org/10.1103/physrevlett.96.031103},
  issue    = {3},
  numpages = {4},
}

@inproceedings{bardeen1968non,
  title={Non-singular general relativistic gravitational collapse},
  author={Bardeen, James},
  booktitle={Proceedings of the 5th International Conference on Gravitation and the Theory of Relativity},
  pages={87},
  year={1968}
}

@Article{FROLOV1981,
  author  = {V.P. Frolov and G.A. Vilkovisky},
  journal = {Physics Letters B},
  title   = {Spherically symmetric collapse in quantum gravity},
  year    = {1981},
  number  = {4},
  pages   = {307-313},
  volume  = {106},
  doi     = {https://doi.org/10.1007/978-1-4613-2701-1\_19},
}

@Book{earman1986primer,
  author    = {Earman, John},
  publisher = {Springer Science \& Business Media},
  title     = {A primer on determinism},
  year      = {1986},
  doi       = {https://doi.org/10.1007/978-94-010-9072-8},
}

@Article{schneider2024role,
  author    = {Schneider, Mike D},
  journal   = {Philosophy of Science},
  title     = {A Role for the Fauxrizon in the Semiclassical Limit of a Fuzzball},
  year      = {2024},
  number    = {1},
  pages     = {225--242},
  volume    = {91},
  doi       = {https://doi.org/10.1017/psa.2023.93},
  publisher = {Cambridge University Press},
}

@Article{manchak2008prediction,
  author    = {Manchak, John Byron},
  journal   = {Foundations of physics},
  title     = {Is prediction possible in general relativity?},
  year      = {2008},
  pages     = {317--321},
  volume    = {38},
  doi       = {https://doi.org/10.1007/s10701-008-9204-6},
  publisher = {Springer},
}

@Book{Takesaki1979,
  author    = {Takesaki, M.},
  publisher = {Springer New York},
  title     = {Theory of Operator Algebras 1},
  year      = {1979},
  doi       = {https://doi.org/10.1007/978-3-662-10451-4},
}

@Article{carballo2020geodesically,
  author  = {Carballo-Rubio, Ra{\'u}l and Di Filippo, Francesco and Liberati, Stefano and Visser, Matt},
  journal = {Physical Review D},
  title   = {Geodesically complete black holes},
  year    = {2020},
  number  = {8},
  pages   = {084047},
  volume  = {101},
  doi     = {https://doi.org/10.1103/physrevd.101.084047},
}

@article{rovelli2024planck,
  title={{Planck stars, White Holes, Remnants and Planck-mass quasi-particles. The quantum gravity phase in black holes' evolution and its manifestations}},
  author={Rovelli, Carlo and Vidotto, Francesca},
  journal={arXiv preprint arXiv:2407.09584},
  year={2024}
}

@Article{wall2013generalized,
  author  = {Wall, Aron C},
  journal = {Classical and Quantum Gravity},
  title   = {The generalized second law implies a quantum singularity theorem},
  year    = {2013},
  number  = {16},
  pages   = {165003},
  volume  = {30},
  doi     = {https://doi.org/10.1088/0264-9381/30/16/165003},
}

@Article{bousso2023quantum,
  author  = {Bousso, Raphael and Shahbazi-Moghaddam, Arvin},
  journal = {Physical Review D},
  title   = {Quantum singularities},
  year    = {2023},
  number  = {6},
  pages   = {066002},
  volume  = {107},
  doi     = {https://doi.org/10.1103/physrevd.107.066002},
}

@Article{hooft1985quantum,
  author    = {'t Hooft, Gerard},
  journal   = {Nuclear Physics B},
  title     = {On the quantum structure of a black hole},
  year      = {1985},
  pages     = {727--745},
  volume    = {256},
  doi       = {https://doi.org/10.1016/0550-3213(85)90418-3},
  publisher = {Elsevier},
}

@book{thorne1986black,
  title={Black Holes: The Membrane Paradigm},
  author={Thorne, Kip S and Price, Richard H and MacDonald, Douglas A},
  year={1986},
  publisher={Yale University Press}
}

@Article{mathur2005fuzzball,
  author    = {Mathur, Samir D},
  journal   = {Fortschritte der Physik: Progress of Physics},
  title     = {The Fuzzball proposal for black holes: An Elementary review},
  year      = {2005},
  number    = {7-8},
  pages     = {793--827},
  volume    = {53},
  doi       = {https://doi.org/10.1002/prop.200410203},
  publisher = {Wiley Online Library},
}

@Article{brown2024evaporation,
  author    = {Brown, Adam R. and Iliesiu, Luca V. and Penington, Geoff and Usatyuk, Mykhaylo},
  journal   = {Journal of High Energy Physics},
  title     = {The evaporation of charged black holes},
  year      = {2026},
  issn      = {1029-8479},
  month     = Jan,
  number    = {1},
  volume    = {2026},
  doi       = {https://doi.org/10.1007/jhep01(2026)109},
  publisher = {Springer Science and Business Media LLC},
}

@article{Earman2009-WTHDTL,
	author = {John Earman and Chris Smeenk and Christian W\"{u}thrich},
	doi = {https://doi.org/10.2307/40271293},
	journal = {Synthese},
	number = {1},
	pages = {91--124},
	publisher = {Springer},
	title = {Do the Laws of Physics Forbid the Operation of Time Machines?},
	volume = {169},
	year = {2009}
}

@InCollection{sep-time-machine,
	author       =	{Earman, John and Wüthrich, Christian and Manchak, JB},
	title        =	{{Time Machines}},
	booktitle    =	{The {Stanford} Encyclopedia of Philosophy},
	editor       =	{Edward N. Zalta and Uri Nodelman},
	howpublished =	{\url{https://plato.stanford.edu/archives/sum2024/entries/time-machine/}},
	year         =	{2024},
	edition      =	{{S}ummer 2024},
	publisher    =	{Metaphysics Research Lab, Stanford University}
}

@article{WeatherallForthcoming-WEAWDG,
	author = {James Owen Weatherall},
	doi = {https://doi.org/10.1017/psa.2022.98},
	journal = {Philosophy of Science},
	pages = {1--10},
	title = {Where Does General Relativity Break Down?},
	year = {forthcoming}
}

\end{document}